\documentclass[12pt]{article}
\usepackage{amsmath,amsthm,amssymb,mathtools}
\usepackage[utf8]{inputenc}
\usepackage{enumitem}
\usepackage{xcolor}
\usepackage[a4paper,margin=1in]{geometry}
\usepackage[authoryear,round]{natbib}
\usepackage[colorlinks=true,linkcolor=blue,citecolor=blue,urlcolor=blue]{hyperref}
\usepackage[font=small]{caption}
\usepackage{tikz}
\usetikzlibrary{positioning,fit,backgrounds}

\newtheorem{theorem}{Theorem}
\newtheorem{lemma}[theorem]{Lemma}
\newtheorem{corollary}[theorem]{Corollary}
\newtheorem{proposition}[theorem]{Proposition}
\DeclareMathOperator{\indeg}{indeg}
\DeclareMathOperator{\outdeg}{outdeg}
\DeclareMathOperator{\Cl}{Cl}
\theoremstyle{definition}
\newtheorem{definition}[theorem]{Definition}

\title{Minimum Network Level Forced by Hardwired Cluster Data}
\author{
Shilong Dai\thanks{Email: \texttt{daishilong@mails.ccnu.edu.cn}}
\qquad
Yangjing Long\thanks{Email: \texttt{yangjing@ccnu.edu.cn}}\\[0.5em]
\small School of Mathematics and Statistics, Central China Normal University,\\
\small No. 152, Luoyu Road, Wuhan, Hubei, PR China
}
\date{}

\begin{document}
\maketitle

\begin{abstract}
Reticulate evolutionary events, such as hybridization, recombination, and
horizontal transfer, can make a tree model inadequate. When evolutionary data
are summarized as hardwired clusters, one can ask how much local reticulation
complexity is forced by the data itself. We address this question for an
arbitrary cluster system $\mathcal C$ on a finite taxon set $X$ by computing
the minimum level of a rooted phylogenetic network whose hardwired cluster
system is exactly $\mathcal C$. Writing $H=\mathcal H[\mathcal C]$, we define
for each non-trivial block $B$ of $H$ a parameter $\mu(B)$ from generating sets
of incompatibility intersections in $B$. If $\ell(\mathcal C)$ denotes the
minimum level of any rooted network $N$ with $C_N=\mathcal C$, then
\[
\ell(\mathcal C)=\max\{\,\mu(B)\mid B\text{ is a non-trivial block of }H\,\}.
\]
Equivalently, $\mathcal C$ is realizable by a rooted level-$k$ network if and
only if $\mu(B)\le k$ for every non-trivial block $B$ of $H$. The lower-bound
proof relates incompatibility intersections to non-root hybrid vertices in
realizing blocks, while the upper-bound proof starts from the Hasse diagram and
iteratively splits selected hybrid vertices without changing the hardwired
cluster system. The result turns a network-design problem into a cluster-side
criterion and provides an interpretable complexity score for hardwired cluster
data, distinct from softwired cluster representation where clusters need only
occur in one displayed tree.
\end{abstract}

\noindent\textbf{Keywords:} phylogenetic network; hardwired cluster;
reticulation; network level; Hasse diagram; cluster system\par
\noindent\textbf{MSC 2020:} 92D15; 05C20; 05C40; 68R10\par\medskip

\section{Introduction}

Reticulate evolutionary events such as hybridization, recombination, and
horizontal transfer can make a single tree an inadequate model for evolutionary
history.  Rooted phylogenetic networks extend rooted trees by allowing
reticulation vertices, and the level of a network measures this complexity
locally: it bounds the number of hybrid vertices that may occur inside one
non-tree-like block.  Level is therefore a natural mathematical proxy for the
local reticulate complexity that is needed to explain a collection of
evolutionary signals \citep{HusonRuppScornavacca2011,HusonScornavacca2011}.

One common way to summarize such signals is by clusters, that is,
descendant leaf sets supported by trees, networks, or gene-tree summaries.  In a
rooted phylogenetic tree, the cluster hierarchy determines the tree uniquely
\citep{SempleSteel2003}.  In a rooted network this uniqueness disappears
\citep{NakhlehWang2005}, and two different cluster semantics become relevant.
Under the softwired semantics, a cluster is represented if it occurs as a clade
in at least one tree displayed by the network.  Under the hardwired semantics,
a cluster must be the full descendant leaf set of an actual network vertex.
Thus softwired clusters are read after resolving reticulation choices, whereas
hardwired clusters are read directly from the network.  We follow the
softwired--hardwired terminology used in the cluster-representation literature
of \citet{NakhlehWang2005} and \citet{KelkScornavaccaVanIersel2011}, and the
hardwired cluster-system viewpoint developed by
\citet{HellmuthSchallerStadler2023}.

The softwired cluster model has led to a substantial algorithmic
literature.  Kelk, Scornavacca, and van Iersel studied the softwired
minimum-level problem for cluster sets, proved polynomial-time solvability for
each fixed level $k$, and analyzed the Cass algorithm, including cases where it
does not minimize level for general cluster sets
\citep{KelkScornavaccaVanIersel2011}.  Related work constructs cluster
networks for summarizing gene-tree information \citep{HusonRupp2008}, uses
fewer reticulations to represent conflicting clusters
\citep{vanIerselKelkRuppHuson2010}, studies bounded-level encodings
\citep{GambetteHuber2012}, and controls reticulation complexity through
decompositions or explicit constructions
\citep{GusfieldBansalBafnaSong2007,WangGuoLiuEtAl2013,ToHabib2009}.  These
results are complementary to the question treated here, because softwired
representation usually asks the network to contain the input clusters, possibly
along with additional clusters.  The hardwired problem considered in this paper
prescribes the whole hardwired cluster system exactly.

Our motivating question is therefore the following.  Given hardwired
cluster information extracted from evolutionary data, what is the minimum local
reticulation complexity forced by the data itself?  Equivalently, for a cluster
system $\mathcal C$ on a finite taxon set $X$, when does there exist a rooted
level-$k$ network $N$ on $X$ whose hardwired cluster system is exactly
$C_N=\mathcal C$?  We seek a criterion that can be computed from the cluster
system before constructing, or guessing, any realizing network.

The Hasse diagram $H=\mathcal H[\mathcal C]$ is the natural object in
which to ask this question.  Previous work has shown that Hasse diagrams,
blocks, and incompatibility structure are central for hardwired cluster systems
of rooted networks \citep{HellmuthSchallerStadler2023}; results on regular and
related network classes also give reconstruction statements in restricted
settings \citep{BaroniSempleSteel2004,Willson2010,Zhang2019}.  What is missing
is a necessary-and-sufficient level-$k$ criterion, expressed directly in
$\mathcal H[\mathcal C]$, for an arbitrary prescribed hardwired cluster system.
The broader cluster-based viewpoint also appears in work on LCA-cluster
systems in directed acyclic graphs
\citep{ShanavasChangatHellmuthStadler2023}, common refinements of rooted
phylogenetic trees \citep{SchallerHellmuthStadler2021}, cluster containment
and network classes \citep{GunawanLuZhang2018,Zhang2019}, and structural
properties of phylogenetic networks
\citep{Hayamizu2018,SuzukiHayamizu2025}.

We provide such a criterion.  For every non-trivial block $B$ of
$H=\mathcal H[\mathcal C]$, we define a parameter $\mu(B)$ from the
intersections of incompatible clusters in $B$ and the smallest cluster pieces
needed to generate those intersections.  Our main theorem proves
\[
\ell(\mathcal C)=
\max\{\,\mu(B)\mid B\text{ is a non-trivial block of }\mathcal H[\mathcal C]\,\},
\]
where $\ell(\mathcal C)$ is the minimum level of a rooted network realizing
$\mathcal C$ exactly.  Thus $\mathcal C$ is realizable by a rooted level-$k$
network if and only if $\mu(B)\le k$ for every non-trivial Hasse block $B$.
The lower-bound direction shows that incompatibility-intersection pieces force
non-root hybrid vertices in any realizing block.  The upper-bound direction
starts from the Hasse diagram and splits selected hybrid vertices without
changing the hardwired cluster system, showing that no larger blockwise level
is needed.  Figure~\ref{fig:roadmap} summarizes this route from the input
cluster system to the minimum-level formula.

\begin{figure}[t]
\centering
\begin{tikzpicture}[
  >=stealth,
  font=\scriptsize,
  box/.style={draw=blue!55!black, rounded corners=2pt, align=center,
    inner xsep=8pt, inner ysep=6pt, minimum height=0.95cm, fill=white},
  main/.style={box, text width=4.8cm},
  side/.style={box, text width=5.2cm, minimum height=1.55cm},
  formula/.style={box, text width=5.5cm, fill=blue!4, minimum height=1.05cm},
  arrow/.style={->, thick, draw=blue!60!black}
]
\node[main] (input) at (0,0) {Input\\ cluster system $\mathcal C$};
\node[main] (hasse) at (0,-1.55) {Hasse diagram\\ $H=\mathcal H[\mathcal C]$};
\node[side] (lower) at (-4.15,-3.85) {\textbf{Lower bound}\\
tools: Props.~\ref{prop:incompatibility-facts-realizing-blocks},
\ref{prop:hasse-block-realized-in-one-network-block},
\ref{prop:network-generators-localize}\\
any realization gives a generating set};
\node[formula] (combine) at (0,-6.25) {Combine the two bounds\\
  $\displaystyle \ell(\mathcal C)=\max_B \mu(B)$};
\node[side] (upper) at (4.15,-3.85) {\textbf{Upper bound}\\
tools: Props.~\ref{prop:admissible-bad-hybrid-splits},
\ref{prop:splitting-invariants},
\ref{prop:block-refinement},
\ref{prop:termination-final-block-control}\\
split redundant hybrids starting from $H$};

\draw[arrow] (input) -- (hasse);
\draw[arrow] (hasse) -- (lower);
\draw[arrow] (hasse) -- (upper);
\draw[arrow] (lower) -- (combine);
\draw[arrow] (upper) -- (combine);
\end{tikzpicture}
\caption{Proof architecture for the minimum-level formula.  The lower-bound
package shows that each conflict block forces at least $\mu(B)$ non-root
hybrid labels in any realization, while the upper-bound package constructs a
realization by splitting redundant hybrid vertices in the Hasse diagram.}
\label{fig:roadmap}
\end{figure}

The paper is organized as follows.  Section~2 introduces cluster systems,
Hasse diagrams, blocks, network level, and the block parameter $\mu(B)$, gives a
set-cover formulation, and develops the examples.  Section~3 states the main
results, explains why $\mu(B)$ is the correct threshold, and records the
computational consequences.  Section~\ref{sec:proofs-main} keeps the proofs of
the main results in the body of the paper, so that the logical route from
hardwired cluster data to the minimum-level formula remains visible.  The most
technical ingredients are then placed in the appendix: Appendix~\ref{app:lower-bound}
contains the lower-bound tools, while Appendices~\ref{app:splitting}--\ref{app:termination}
contain the upper-bound construction: the splitting operation, the cut-vertex
and block-localization controls, and the termination and final block-control
argument.

\section{Hasse Blocks and the Incompatibility-Intersection Parameter}

We fix notation for cluster systems, Hasse diagrams, rooted networks, and the
block parameters used in the characterization.  This section is intentionally
self-contained: notation that is used throughout the proof is introduced here,
rather than inside later lemma statements.  Let $X$ be a finite nonempty set,
let $\mathcal C$ be a cluster system on $X$, and write
$H:=\mathcal H[\mathcal C]$.

\subsection{Cluster systems, Hasse diagrams, and compatibility}

\paragraph{Rooted networks and clusters.}
We use standard rooted-network terminology
\citep{HusonRuppScornavacca2011,HellmuthSchallerStadler2023}.  A rooted
network is a finite directed acyclic graph $N=(V,E)$ with a unique root
$\rho_N$ of indegree $0$.  A vertex $v$ is a leaf if $\outdeg_N(v)=0$, a
hybrid, or reticulation, vertex if $\indeg_N(v)>1$, and a tree vertex if
$\indeg_N(v)\le 1$.  The leaf set is denoted by $X(N)$; if $X(N)=X$, then $N$
is a rooted network on $X$.  Throughout, network means rooted network.  For
$v\in V(N)$, let $\sigma(v)$ be the set of leaves
reachable from $v$; the hardwired cluster system of $N$ is
$C_N:=\{\sigma(v)\mid v\in V(N)\}$. This is the vertex-based hardwired
convention used for cluster systems of rooted networks in
\citet{HellmuthSchallerStadler2023}.

For comparison, the softwired cluster system is obtained by ranging over
trees displayed by $N$ and recording the clusters that occur as clades in those
trees.  Equivalently, one may choose one incoming edge at each hybrid vertex,
delete the other incoming edges, suppress degree-two vertices in the usual way,
and then read clusters from the resulting displayed tree.  The two conventions
coincide for rooted trees but diverge for networks: a hardwired cluster is a
descendant set of an actual network vertex, whereas a softwired cluster may
appear only after a particular resolution of reticulation choices.  All
realization results in this paper use the hardwired convention and require
equality $C_N=\mathcal C$.

A cluster on $X$ is a nonempty subset of $X$.  A cluster system on $X$ is a
family $\mathcal C\subseteq 2^X\setminus\{\varnothing\}$ such that
$X\in\mathcal C$ and $\{x\}\in\mathcal C$ for every $x\in X$.

\paragraph{Hasse diagrams and order notation.}
The Hasse diagram $\mathcal H[\mathcal C]$ is the directed acyclic graph with
vertex set $\mathcal C$ and edge $C\to C'$ precisely when $C'\subsetneq C$ and
there is no $C''\in\mathcal C$ with $C'\subsetneq C''\subsetneq C$.  For a
rooted network $N$ and vertices $u,v$, write $v\preceq_N u$ if there is a
directed path from $u$ to $v$, and write $v\prec_N u$ if $v\preceq_N u$ and
$v\neq u$.  Vertices $u$ and $v$ are comparable if $u\preceq_N v$ or
$v\preceq_N u$, and incomparable otherwise.  If $u\to v$ is an edge of $N$,
then $u$ is a parent of $v$ and $v$ a child of $u$.  We write
$u\leadsto v$ for a directed path from $u$ to $v$, possibly of length zero.

We freely identify a cluster $A\in\mathcal C$ with the corresponding vertex of
$H$ whenever convenient. Thus expressions such as $A\in V(B_H)$, $A\subseteq A'$,
and $A\in\Cl(B_N)$ refer to the same cluster viewed respectively as a
vertex of the Hasse diagram, as a subset of $X$, and as an element of a cluster
family. The graph $H$ is connected, has $X$ as its unique root, its
inclusion-minimal vertices are exactly the singletons, and every non-singleton
cluster $C\in\mathcal C$ satisfies $\outdeg_H(C)\ge 2$; hence $H$ is itself a
rooted network. Later, for current networks arising in the splitting
construction, $\lambda(v)$ denotes the original Hasse-label of a current vertex
$v$, so that $\lambda(v)\in V(B)$ means that this original label belongs to the
original non-trivial block $B$ of $H$.

\paragraph{Compatibility, incompatibility, and blocks.}
We use the standard compatibility terminology for clusters
\citep{SempleSteel2003,HusonRuppScornavacca2011}.  Two clusters
$A,A'\in\mathcal C$ are \emph{compatible} if $A\cap A'=\varnothing$,
$A\subseteq A'$, or $A'\subseteq A$.  They are \emph{incompatible}
if
$A\cap A'\neq\varnothing$, $A\not\subseteq A'$, and
$A'\not\subseteq A$.  The graph whose vertices are the clusters in a
family and whose edges join incompatible pairs is the usual
\emph{incompatibility graph}; to avoid overloading notation, the symbol
$\mathcal J(B)$ below denotes a family of intersections, not this graph.  A
block, or biconnected component, of a rooted network is a maximal
$2$-connected subgraph of its underlying undirected graph.  A block is
non-trivial if it contains an undirected cycle, equivalently, if it is neither
a single vertex nor a single edge.  For a non-trivial block $B$ of a rooted
network, a vertex $m\in V(B)$ is terminal if it is $\preceq_N$-minimal in
$V(B)$.

For a connected undirected graph $G$, its block-cut tree $T(G)$ is the
bipartite graph whose nodes are the blocks and cut vertices of $G$, with a
block-node $B$ adjacent to a cut-vertex-node $c$ exactly when $c\in V(B)$.

\subsection{Blocks and network level}

\paragraph{Block notation convention.}
When Hasse-diagram blocks and network blocks appear in the same argument, we
write $B_H,D_H$ for blocks of $H=\mathcal H[\mathcal C]$ and $B_N,D_N$ for
blocks of a realizing or current network.  In sections dealing only with the
splitting construction, an unadorned $B$ is an original block of $H$ unless a
block is explicitly called current or network-side.  For an original cut vertex,
$B_H^\uparrow$ denotes the unique original block in which the cut vertex is a
non-root vertex, and $B_H^\downarrow$ denotes a child block below that cut
vertex in the rooted block-cut tree.

If $M$ is a rooted network with underlying undirected graph $G$, we root $T(G)$
at the node corresponding to the root of $M$: if $\rho_M$ is a cut vertex of
$G$, we take the cut-vertex-node $\rho_M$ as the root; otherwise we take the
unique block-node containing $\rho_M$. The resulting rooted tree is the
\emph{rooted block-cut tree} of $M$. The same convention is used when
$M=H=\mathcal H[\mathcal C]$. In this rooted tree we speak of ancestor and
descendant blocks, and a topological order of the blocks means any order
compatible with the descendant relation.

The next two lemmas record basic structural properties of non-trivial blocks in rooted networks that will be used repeatedly in the sequel.

\begin{lemma}
Let $N$ be a rooted network, and let $B$ be a non-trivial block of $N$.
Then $V(B)$ has a unique $\preceq_N$-maximal vertex.
\end{lemma}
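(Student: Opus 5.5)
We argue by contradiction. Suppose $V(B)$ has two distinct $\preceq_N$-maximal vertices $u\neq u'$. Since $N$ is a rooted DAG with root $\rho_N$, every vertex is reachable from $\rho_N$. Fix directed paths $P:\rho_N\leadsto u$ and $P':\rho_N\leadsto u'$. For each path, let $p$ be the last vertex on $P$ that is not in $V(B)\setminus\{u\}$ before entering $B$; more cleanly, we consider the first vertex of $P$ lying in $V(B)$. By maximality of $u$ in $V(B)$, this first vertex must be $u$ itself: any earlier vertex $w\in V(B)$ on $P$ would satisfy $u\prec_N w$, contradicting maximality. The same holds for $P'$ and $u'$. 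In particular, if $u=\rho_N$ or $u'=\rho_N$, then one of them lies above the other, which is a contradiction. So both differ from $\rho_N$, and $P$ meets $V(B)$ only in its endpoint $u$, and likewise $P'$ meets $V(B)$ only in $u'$.

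Next we build a cycle through $B$ and outside it. Let $z$ be the last common vertex of $P$ and $P'$ traversed from the root. Then the segments $z\leadsto u$ on $P$ and $z\leadsto u'$ on $P'$ are internally disjoint, and both avoid $V(B)$ except at $u$ and $u'$ respectively. Also $z\notin V(B)$, since $z$ is an internal (non-terminal) vertex of $P$ and $z\neq u$. Because $B$ is $2$-connected, hence connected, it contains an undirected path $Q$ from $u$ to $u'$ inside $B$. Concatenating $z\leadsto u$, $Q$, and the reverse of $z\leadsto u'$ gives an undirected cycle $Z$. This cycle uses at least one vertex, $z$, outside $B$ and shares the edges of $Q$ (at least one edge, as $u\neq u'$) with $B$.

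The final step, which is the main point to verify carefully, is that this contradicts maximality of the block. A cycle together with a $2$-connected subgraph sharing at least two vertices (here $u,u'$) yields a $2$-connected union: removing any single vertex leaves $B$ connected (or $B$ minus a vertex connected), and the remaining arc of $Z$ stays attached to $B$ through the other of $u,u'$. Hence $B\cup Z$ is $2$-connected and strictly larger than $B$, contradicting that $B$ is a maximal $2$-connected subgraph. Equivalently, we can invoke the standard fact that every cycle lies entirely within a single block, and $Z$ shares an edge with $B$.

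Finally, existence of a maximal element is immediate because $V(B)$ is finite and nonempty and $\preceq_N$ is a partial order (acyclicity). Combined with the argument above, this gives exactly one $\preceq_N$-maximal vertex. The only delicate points are that the two approach paths are internally disjoint from $B$ and from each other after $z$, both of which follow from the choice of $z$ and the first-entry argument.
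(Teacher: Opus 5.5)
Your argument is correct, but it takes a different route from the paper.

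\textbf{The paper's route.} It argues directly. It sets $r_B:=\rho_N$ if $B$ is the root block of the rooted block-cut tree, and otherwise takes $r_B$ to be the cut vertex attaching $B$ to its parent node. Every directed path from $\rho_N$ to a vertex $x\in V(B)$ must pass through $r_B$, so $x\preceq_N r_B$. Hence $r_B$ is actually a maximum of $V(B)$, and uniqueness is immediate.

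\textbf{Your route.} You argue by contradiction and never root the block-cut tree. Two distinct maximal vertices $u,u'$ would give root paths that enter $V(B)$ only at their endpoints. Splitting these paths at their last common vertex $z$ yields a $u$--$u'$ path whose internal vertices lie outside $B$, i.e.\ an external ear. This contradicts maximality of the block; it is essentially the fact the paper later records as Proposition~\ref{prop:hasse-block-realized-in-one-network-block}(a).

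\textbf{What each buys.} Your argument is more elementary and self-contained: it uses only reachability from the root and the ear property of blocks. The paper's argument is shorter once the block-cut tree is available. It also identifies $\max B$ as the attachment vertex of $B$ in the rooted block-cut tree. Later arguments rely on that identification, e.g.\ Proposition~\ref{prop:admissible-bad-hybrid-splits}(a). Your proof gives uniqueness but does not by itself supply it.

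\textbf{One step to tighten.} You should justify $z\neq u$ explicitly. If $z=u$, then $u$ would lie on $P'$, but $P'$ meets $V(B)$ only at $u'\neq u$. Your stated reason, that $z$ is an internal vertex of $P$, already presupposes this.
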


\begin{proof}
Let $r_B$ be the root of the block-node $B$ in the rooted block-cut tree of $N$:
if $B$ is the root block, set $r_B:=\rho_N$; otherwise let $r_B$ be the cut
vertex corresponding to the parent of $B$. Then $r_B\in V(B)$.

For any $x\in V(B)$, every directed path from $\rho_N$ to $x$ enters the
block-node $B$ through $r_B$, so $x\preceq_N r_B$. Thus $r_B$ is
$\preceq_N$-maximal in $V(B)$. If $y\in V(B)$ is another maximal vertex, then
$y\preceq_N r_B$, hence $y=r_B$. Therefore the maximal vertex is unique.
\end{proof}

For a non-trivial block $B$ of a rooted network, write $\max B$ for this unique
$\preceq_N$-maximal vertex and call it the \emph{root} of $B$. A vertex of $B$
is a \emph{non-root vertex} of $B$ if it is distinct from $\max B$; if such a
vertex is hybrid, it is a \emph{non-root hybrid vertex} of $B$. Since $H$ is
itself a rooted network, we use the same notation for non-trivial blocks of $H$.

\paragraph{Interior of a block.}
Let $N$ be a rooted network, and let $B$ be a non-trivial block of $N$.  If
$m_1,\dots,m_h$ are the terminal vertices of $B$, define
$B^0:=V(B)\setminus\bigl(\{\max B\}\cup\{m_1,\dots,m_h\}\bigr)$.

\begin{lemma}
\label{lem:terminal-two-parents}
Let $N$ be a rooted network, let $B$ be a non-trivial block of $N$, and let $m\in V(B)$
be a terminal vertex of $B$.
Then $m$ has at least two distinct parents in $B$.
\end{lemma}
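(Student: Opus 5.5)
We argue by contradiction: suppose the terminal vertex $m$ has at most one parent in $B$, and show that $m$ would then be a cut vertex separating nothing, which is incompatible with $2$-connectivity of $B$.

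\emph{Step 1: $m$ has at least two neighbours in $B$.} Since $B$ is non-trivial, it is $2$-connected with at least three vertices. Hence every vertex of $B$ lies on an undirected cycle in $B$, and so has degree at least $2$ in the underlying undirected graph of $B$.

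\emph{Step 2: no child of $m$ lies in $B$.} If $w\in V(B)$ were a child of $m$, then $w\prec_N m$, contradicting the $\preceq_N$-minimality of $m$ in $V(B)$. Consequently every neighbour of $m$ in $B$ is a parent of $m$, and by Step 1 $m$ has at least two parents in $B$. These parents are distinct because they are distinct neighbours in the underlying simple graph.

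\emph{Step 3: handling parallel edges.} If the paper permits parallel edges $u\to m$, a block consisting of two parallel edges would give $m$ only one distinct parent. I would rule this out by noting that a rooted network here is a digraph without multi-edges, as is implicit in $H$ and in the standard conventions. Alternatively, in a multigraph a $2$-connected block with at least three vertices still forces two distinct neighbours, and such a block can be assumed, since non-trivial is defined via an undirected cycle, which requires at least three vertices in a simple graph.

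The main (minor) obstacle is exactly this convention about multi-edges and about a block being $2$-connected in the "at least three vertices" sense. Once simplicity is fixed, the argument is just minimality plus minimum degree $2$ in a $2$-connected graph.
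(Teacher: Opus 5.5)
Your argument is correct and is essentially the paper's proof. Biconnectivity of the non-trivial block gives $m$ two distinct neighbours in $B$, and the $\preceq_N$-minimality of $m$ forces both to be parents. The contradiction framing in your opening sentence and the multi-edge discussion in Step~3 are unnecessary: the argument is direct, and the paper works with simple digraphs.
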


\begin{proof}
Since the underlying undirected graph of $B$ is biconnected, every vertex of $B$
has degree at least $2$. Thus $m$ has distinct neighbors $a,b\in V(B)$. As $m$
is terminal in $B$, neither $m\to a$ nor $m\to b$ can occur, since that would
give $a\prec_N m$ or $b\prec_N m$. Hence $a\to m$ and $b\to m$, so $m$ has at
least two distinct parents in $B$.
\end{proof}

\paragraph{Network-side level.}
Following \citet{HellmuthSchallerStadler2023}, for a non-trivial block $B_N$ of
a rooted network $N$, let $H^\times(B_N)$ be the set of non-root hybrid vertices
of $B_N$ and put $h(B_N):=|H^\times(B_N)|$.  A rooted network $N$ is level-$k$
if every non-trivial block $B_N$ of $N$ satisfies $h(B_N)\le k$.

The minimum level of the cluster system $\mathcal C$ is
\[
\ell(\mathcal C):=
\min\{\,k\ge 0 \mid C_N=\mathcal C
\text{ for some rooted level-}k\text{ network }N\text{ on }X\,\}.
\]
This minimum is finite: after identifying each singleton vertex $\{x\}$ of
$H=\mathcal H[\mathcal C]$ with the taxon $x$, the Hasse diagram itself is a
rooted network on $X$ with hardwired cluster system $\mathcal C$.

\subsection{The block invariant \texorpdfstring{$\mu(B)$}{mu(B)}}

\begin{definition}
Let $B$ be either a non-trivial block of the Hasse diagram
$H=\mathcal H[\mathcal C]$ or a non-trivial block of a rooted network $N$. Define
\[
\Cl(B):=
\begin{cases}
V(B), & \text{if } B \text{ is a block of } H,\\[1mm]
\{\sigma(v)\mid v\in V(B)\}, & \text{if } B \text{ is a block of } N,
\end{cases}
\]
and
\[
\mathcal J(B):=
\{\,A\cap A' \mid A,A'\in \Cl(B),\ A \text{ and } A' \text{ are incompatible}\,\}.
\]
A set $K\subseteq \Cl(B)$ is an \emph{incompatibility-intersection generating
set} of $B$ if
\[
I=\bigcup\{\,E\in K\mid E\subseteq I\,\}
\qquad\text{for every } I\in\mathcal J(B).
\]
For brevity, we also call such a set a \emph{generating set} of $B$. Define
$\mu(B):=0$ if $\mathcal J(B)=\varnothing$, and otherwise
\[
\mu(B):=
\min\{\,|K| \mid K\subseteq \Cl(B)\text{ is a generating set of }B\,\}.
\]
If no generating set exists, this minimum is interpreted as $\infty$.
When needed, we write $\Cl(B_H),\mathcal J(B_H),\mu(B_H)$ for a
non-trivial block $B_H$ of $H$, and $\Cl(B_N),\mathcal J(B_N),\mu(B_N)$
for a non-trivial block $B_N$ of a rooted network.  Thus $\Cl(B)$ is always a
family of clusters, that is, a family of nonempty subsets of $X$; in the Hasse
case the vertices of $B$ are already such subsets.
\end{definition}

\begin{proposition}[Set-cover formulation]\label{prop:set-cover-formulation}
Let $B$ be a non-trivial Hasse block or a non-trivial network block, and define
$U_B:=\{\,(I,x)\mid I\in\mathcal J(B),\ x\in I\,\}$.
For $E\in\Cl(B)$, say that $E$ covers $(I,x)\in U_B$ if
$x\in E\subseteq I$.
Then $K\subseteq\Cl(B)$ is a generating set of $B$ if and only if $K$ covers
all elements of $U_B$. Consequently, $\mu(B)$ is the optimum value of this
finite set-cover instance, with optimum $0$ when $U_B=\varnothing$ and
optimum $\infty$ when no cover exists.
\end{proposition}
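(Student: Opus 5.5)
The plan is to unfold both definitions and check that they express the same condition elementwise. For each fixed $I\in\mathcal J(B)$, the generating condition $I=\bigcup\{E\in K\mid E\subseteq I\}$ is an equality of sets. The inclusion $\supseteq$ holds automatically for every $K\subseteq\Cl(B)$, since each $E$ in the union satisfies $E\subseteq I$. So $K$ is a generating set exactly when, for every $I\in\mathcal J(B)$, the inclusion $I\subseteq\bigcup\{E\in K\mid E\subseteq I\}$ holds. Written pointwise, this says: for every $I\in\mathcal J(B)$ and every $x\in I$ there is some $E\in K$ with $x\in E\subseteq I$. That is precisely the statement that $K$ covers $(I,x)$ in the sense of the proposition, so $K$ covers every element of $U_B$. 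Both directions of the equivalence follow by reading this chain of equivalences forwards and backwards.

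For the consequence, note first that $\mathcal J(B)$ is finite, because $\Cl(B)$ is a finite family of subsets of the finite set $X$. Also every $I\in\mathcal J(B)$ is nonempty, since incompatible clusters intersect by definition. Hence $U_B$ is finite, and $U_B=\varnothing$ if and only if $\mathcal J(B)=\varnothing$. The feasible solutions of the set-cover instance with universe $U_B$ and candidate sets indexed by $E\in\Cl(B)$ are, by the equivalence above, exactly the generating sets of $B$. The minimum cardinality over them is therefore $\mu(B)$.

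Two boundary cases need matching with the definition of $\mu$. When $\mathcal J(B)=\varnothing$, the empty family covers the empty universe, so the optimum is $0$, agreeing with $\mu(B):=0$. When no generating set exists, the instance has no cover, and both quantities are $\infty$ by the stated convention.

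There is no real obstacle here. The only points needing care are the trivial inclusion $\supseteq$ and the fact that intersections are nonempty, which together make $U_B=\varnothing$ correspond exactly to the case $\mathcal J(B)=\varnothing$.
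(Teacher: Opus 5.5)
Your proposal is correct and follows essentially the same route as the paper: both observe that the inclusion $\bigcup\{E\in K\mid E\subseteq I\}\subseteq I$ is automatic. Both therefore reduce the generating condition to the pointwise requirement that each $x\in I$ lies in some $E\in K$ with $x\in E\subseteq I$, and then appeal to finiteness and the matching boundary conventions. Your explicit check that $U_B=\varnothing$ exactly when $\mathcal J(B)=\varnothing$ (because incompatible clusters have nonempty intersection) spells out a point the paper only asserts.
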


\begin{proof}
The defining equality for a generating set is equivalent to the following
pointwise condition: for every $I\in\mathcal J(B)$ and every $x\in I$, there is
some $E\in K$ such that $x\in E\subseteq I$. This is exactly the assertion that
the sets in $K$ cover every element $(I,x)$ of $U_B$.

Indeed, if the equality
\(I=\bigcup\{\,E\in K\mid E\subseteq I\,\}\)
holds, then each $x\in I$ lies in some member $E\in K$ with $E\subseteq I$.
Conversely, if this pointwise condition holds, every element of $I$ lies in the
union on the right, while every set contributing to that union is contained in
$I$. Hence the two sides are equal. Since $\Cl(B)$ and $\mathcal J(B)$ are
finite, this is a finite set-cover instance. The stated conventions agree with
the definition of $\mu(B)$.
\end{proof}

\paragraph{Intuition.}
If two clusters in $B$ are incompatible, their intersection is a shared leaf-set region
that has to be accounted for inside the block.  A generating set $K$ records
cluster pieces from $B$ whose unions reconstruct all such regions.  Thus
$\mu(B)$ is the minimum number of pieces needed to reconstruct all
incompatibility intersections in $B$.  In a realizing network, the lower-bound argument
charges these pieces to descendant clusters of non-root hybrid vertices, so
$\mu(B)$ measures the reticulation structure forced by the cluster system.

\subsection{\texorpdfstring{Biological interpretation of $\mu(B)$}{Biological interpretation of mu(B)}}

The parameter $\mu(B)$ has a direct interpretation in terms of local
conflict in hardwired cluster data.  A non-trivial Hasse block $B$ collects
clusters that cannot be separated into independent tree-like pieces; it is the
cluster-side analogue of one local tangle in a phylogenetic network.  When two
clusters in $B$ are incompatible, their intersection is a descendant leaf
region that is demanded simultaneously by different supported groupings.  Such
regions are precisely where a purely tree-like explanation breaks down.

The generating set in the definition of $\mu(B)$ asks how many cluster
pieces are needed to account for all these shared descendant regions.  Thus
$\mu(B)$ is not the number of hybrid vertices that happen to appear in the
Hasse diagram; those vertices may be redundant artifacts of the representation.
Instead, $\mu(B)$ counts the minimum number of independent descendant-region
pieces that any exact hardwired realization must explain by non-root hybrid
vertices.  Consequently
$\max_B\mu(B)$ is the minimum local reticulation complexity forced by the
hardwired cluster data, block by block.

\subsection{Examples}

We give two small examples illustrating the block parameter $\mu(B)$. In the
figures, thick solid arcs belong to the indicated non-trivial block, dashed gray
arcs are pendant edges outside that block, and filled vertices form a minimum
incompatibility-intersection generating set $K$.

\tikzset{
    v/.style={circle,draw,fill=white,inner sep=1.2pt,minimum size=4.5pt},
    gen/.style={circle,draw,fill=black,inner sep=1.2pt,minimum size=4.5pt},
    hyb/.style={circle,draw,fill=gray!25,inner sep=1.2pt,minimum size=4.5pt},
    bedge/.style={->,line width=0.75pt},
    pedge/.style={->,gray,densely dashed,line width=0.45pt},
    lab/.style={font=\scriptsize}
}

\smallskip
\noindent\textbf{Example 1: a block with $\mu(B)=1$.}

Let $X=\{1,2,3\}$ and
$\mathcal C_1=\{X,A=\{1,2\},C=\{2,3\},\{1\},\{2\},\{3\}\}$.
The Hasse diagram has one non-trivial block
$B_1=\{X,A,C,\{2\}\}$.

\begin{figure}[ht]
\centering
\begin{tikzpicture}[scale=1.05,>=stealth]
\node[v,label={[lab]above:$X$}] (X) at (0,0) {};
\node[v,label={[lab]left:$A$}] (A) at (-1.4,-1.1) {};
\node[v,label={[lab]right:$C$}] (C) at (1.4,-1.1) {};
\node[v,label={[lab]below:$\{1\}$}] (one) at (-2.1,-2.2) {};
\node[gen,label={[lab]below:$\{2\}$}] (two) at (0,-2.2) {};
\node[v,label={[lab]below:$\{3\}$}] (three) at (2.1,-2.2) {};

\draw[bedge] (X) -- (A);
\draw[bedge] (X) -- (C);
\draw[pedge] (A) -- (one);
\draw[bedge] (A) -- (two);
\draw[bedge] (C) -- (two);
\draw[pedge] (C) -- (three);

\node[lab] at (0.95,-0.15) {$B_1$};
\end{tikzpicture}
\caption{The unique non-trivial block of $\mathcal C_1$. The filled vertex is $K=\{\{2\}\}$.}
\label{fig:example-mu-one}
\end{figure}

Inside $B_1$, the only incompatible pair is $A$ and $C$, with
$A\cap C=\{2\}$. Thus $\mathcal J(B_1)=\{\{2\}\}$. The set
$K=\{\{2\}\}$ is forced, since the only nonempty cluster in
$\Cl(B_1)$ contained in $\{2\}$ is $\{2\}$ itself. Hence
$\mu(B_1)=1$. The Hasse diagram itself realizes $\mathcal C_1$, and the
only non-root hybrid vertex in $B_1$ is $\{2\}$. This is the basic level-$1$
case.

\smallskip
\noindent\textbf{Example 2: illustrative hardwired cluster data with two forced reticulations.}

Suppose four taxa are summarized by a collection of well-supported
descendant leaf sets.  The following cluster system is meant only as a toy
example of such hardwired cluster data, not as an empirical data set. Let $X=\{1,2,3,4\}$ and
\[
\mathcal C_2=
\begin{aligned}[t]
\{&X,
P=\{1,2,3\},
Q=\{1,2,4\},
R=\{1,2\},
S=\{1,3\},
T=\{2,4\},
\\
&\{1\},\{2\},\{3\},\{4\}\}.
\end{aligned}
\]
The unique non-trivial block is
$
B_2=\{X,P,Q,R,S,T,\{1\},\{2\}\}.
$
The edges from $S$ to $\{3\}$ and from $T$ to $\{4\}$ are pendant edges outside
this block.

\begin{figure}[ht]
\centering
\begin{tikzpicture}[scale=1.0,>=stealth]
\node[v,label={[lab]above:$X$}] (X) at (0,0) {};

\node[v,label={[lab]left:$P$}] (P) at (-1.7,-1.05) {};
\node[v,label={[lab]right:$Q$}] (Q) at (1.7,-1.05) {};

\node[v,label={[lab]left:$S$}] (S) at (-3.0,-2.15) {};
\node[hyb,label={[lab]above:$R$}] (R) at (0,-2.15) {};
\node[v,label={[lab]right:$T$}] (T) at (3.0,-2.15) {};

\node[v,label={[lab]below:$\{3\}$}] (three) at (-3.0,-3.25) {};
\node[gen,label={[lab]below:$\{1\}$}] (one) at (-1.2,-3.25) {};
\node[gen,label={[lab]below:$\{2\}$}] (two) at (1.2,-3.25) {};
\node[v,label={[lab]below:$\{4\}$}] (four) at (3.0,-3.25) {};

\draw[bedge] (X) -- (P);
\draw[bedge] (X) -- (Q);

\draw[bedge] (P) -- (S);
\draw[bedge] (P) -- (R);
\draw[bedge] (Q) -- (R);
\draw[bedge] (Q) -- (T);

\draw[bedge] (S) -- (one);
\draw[bedge] (R) -- (one);
\draw[bedge] (R) -- (two);
\draw[bedge] (T) -- (two);

\draw[pedge] (S) -- (three);
\draw[pedge] (T) -- (four);

\node[lab] at (0.85,-0.3) {$B_2$};
\end{tikzpicture}
\caption{A block with three Hasse-hybrid vertices but only two forced generators. The gray vertex $R=\{1,2\}$ is hybrid but not needed in a minimum generating set; the black vertices form $K=\{\{1\},\{2\}\}$.}
\label{fig:example-hasse-three-hybrids}
\end{figure}

Viewing the Hasse diagram itself as a rooted network, the non-root hybrid
vertices in $B_2$ are $R=\{1,2\}$, $\{1\}$, and $\{2\}$.
Indeed, $R$ has parents $P$ and $Q$, the vertex $\{1\}$ has parents $R$ and
$S$, and the vertex $\{2\}$ has parents $R$ and $T$. Thus the Hasse diagram has
three non-root hybrid vertices inside this block.

However, the incompatibility intersections inside $B_2$ are generated by only two
clusters. The relevant intersections are \(P\cap Q=R=\{1,2\}\),
\(P\cap T=\{2\}\), and \(Q\cap S=\{1\}\),
together with the repeated intersections
$
R\cap S=\{1\},\qquad R\cap T=\{2\}.
$
Hence
$
\mathcal J(B_2)=\{\{1,2\},\{1\},\{2\}\}.
$
The set $K=\{\{1\},\{2\}\}$
generates every member of $\mathcal J(B_2)$, because
\(\{1,2\}=\{1\}\cup\{2\}\), \(\{1\}=\{1\}\), and \(\{2\}=\{2\}\).
Neither singleton can be omitted, so $K$ is minimum and
$
\mu(B_2)=2.
$
Thus this block satisfies
$
\mu(B_2)=2<3,
$
even though the Hasse diagram has three non-root hybrid vertices in the block.
This example illustrates why $\mu(B)$ measures the number of independent
incompatibility-intersection pieces, not simply the number of hybrid vertices already
present in the Hasse diagram.

\begin{figure}[ht]
\centering
\begin{tikzpicture}[scale=1.0,>=stealth]
\node[v,label={[lab]above:$X$}] (X) at (0,0) {};

\node[v,label={[lab]left:$P$}] (P) at (-1.8,-1.05) {};
\node[v,label={[lab]right:$Q$}] (Q) at (1.8,-1.05) {};

\node[v,label={[lab]left:$S$}] (S) at (-3.2,-2.25) {};
\node[v,label={[lab]above:$R^{(P)}$}] (RP) at (-0.8,-2.25) {};
\node[v,label={[lab]above:$R^{(Q)}$}] (RQ) at (0.8,-2.25) {};
\node[v,label={[lab]right:$T$}] (T) at (3.2,-2.25) {};

\node[v,label={[lab]below:$\{3\}$}] (three) at (-3.2,-3.55) {};
\node[gen,label={[lab]below:$\{1\}$}] (one) at (-1.25,-3.55) {};
\node[gen,label={[lab]below:$\{2\}$}] (two) at (1.25,-3.55) {};
\node[v,label={[lab]below:$\{4\}$}] (four) at (3.2,-3.55) {};

\draw[bedge] (X) -- (P);
\draw[bedge] (X) -- (Q);

\draw[bedge] (P) -- (S);
\draw[bedge] (P) -- (RP);

\draw[bedge] (Q) -- (RQ);
\draw[bedge] (Q) -- (T);

\draw[bedge] (S) -- (one);
\draw[pedge] (S) -- (three);

\draw[bedge] (RP) -- (one);
\draw[bedge] (RP) -- (two);

\draw[bedge] (RQ) -- (one);
\draw[bedge] (RQ) -- (two);

\draw[bedge] (T) -- (two);
\draw[pedge] (T) -- (four);

\node[lab] at (1.05,-0.3) {$B_2'$};
\end{tikzpicture}
\caption{A level-$2$ network realizing the same cluster system $\mathcal C_2$. The original Hasse-hybrid vertex $R=\{1,2\}$ has been split into two vertices $R^{(P)}$ and $R^{(Q)}$, both with descendant cluster $\{1,2\}$. The only non-root hybrid vertices in the non-trivial block are the filled vertices $\{1\}$ and $\{2\}$.}
\label{fig:example-level-two-realization}
\end{figure}

The network in Figure~\ref{fig:example-level-two-realization} realizes the
same cluster system as the Hasse diagram in
Figure~\ref{fig:example-hasse-three-hybrids}. Indeed, the two vertices
$R^{(P)}$ and $R^{(Q)}$ both have descendant cluster $\{1,2\}$, so the split
does not introduce a new cluster. The descendant clusters of $P,Q,S,T$ remain
$P,Q,S,T$, respectively, and the leaves still give the four singletons. Hence
the hardwired cluster system is exactly $\mathcal C_2$.

In this realization, $R^{(P)}$ and $R^{(Q)}$ are not hybrid vertices, since
each has indegree one. The only non-root hybrid vertices in the non-trivial
block are $\{1\}$ and $\{2\}$. Therefore this realizing network has
$h(B_2')=2$, and so it is level-$2$.
Therefore any network that exactly realizes these hardwired clusters must
contain at least two non-root hybrid vertices in the relevant local block, and
Figure~\ref{fig:example-level-two-realization} shows that two are sufficient.
In this sense the data force level $2$, even though the original Hasse diagram
contains three non-root hybrid vertices in the same block.

\begin{proposition}[A Boolean-lattice gap]\label{prop:boolean-lattice-gap}
Let $n\ge 3$, let $X_n=\{1,\ldots,n\}$, and let
$\mathcal C_n=2^{X_n}\setminus\{\varnothing\}$.
The Hasse diagram $H_n=\mathcal H[\mathcal C_n]$ has a unique non-trivial
block $B_n$, namely the whole Hasse diagram. Moreover, $\mu(B_n)=n$, whereas
the Hasse diagram itself has $2^n-n-2$ non-root hybrid vertices in this block.
\end{proposition}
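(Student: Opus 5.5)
The plan has three parts, handled separately: the block structure of $H_n$, the hybrid count, and the value $\mu(B_n)=n$.

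\emph{Block structure.} The Hasse diagram of the Boolean lattice minus $\varnothing$ has an edge $A\to A\setminus\{x\}$ for every nonsingleton $A$ and $x\in A$. To show the underlying graph is $2$-connected, we first note it is connected (every vertex reaches a singleton and $X_n$ reaches every vertex). It then suffices to show that deleting any single vertex leaves it connected. If we delete $X_n$, every set still has a path down to a singleton $\{x\}$, and any two singletons $\{x\},\{y\}$ are joined through $\{x,y\}$, which survives because $n\ge3$. If we delete a proper subset $A$ with $|A|\ge2$, every remaining vertex still reaches $X_n$ via supersets; these avoid $A$ except for subsets of $A$, and those reach $X_n$ through $A\cup\{z\}$ for $z\notin A$ or through $X_n\setminus\{a\}$-type chains. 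If we delete a singleton $\{x\}$, every other vertex still has an upward path to $X_n$ that avoids $\{x\}$, since upward paths never return to singletons. So there is no cut vertex, and the whole diagram is one non-trivial block; it contains cycles such as $X_n\to X_n\setminus\{1\}\to\{2,3\}$ and so on.

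\emph{Hybrid count.} A vertex $A$ has indegree $n-|A|$, namely the supersets $A\cup\{z\}$. It is hybrid exactly when $|A|\le n-2$. The root $X_n$ has indegree $0$, and the $n$ sets of size $n-1$ have indegree $1$. Hence the number of non-root hybrids is $(2^n-1)-1-n=2^n-n-2$.

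\emph{Value of $\mu(B_n)$.}
\begin{itemize}
\item \emph{Every singleton lies in $\mathcal J(B_n)$.} For $x$ choose distinct $y,z\ne x$; then $\{x,y\}$ and $\{x,z\}$ are incompatible and meet in $\{x\}$. This uses $n\ge3$.
\item \emph{Lower bound.} Any generating set $K$ must contain each $\{x\}$, because the only nonempty member of $\Cl(B_n)$ contained in $\{x\}$ is $\{x\}$ itself. So $|K|\ge n$.
\item \emph{Upper bound.} The $n$ singletons generate every $I\in\mathcal J(B_n)$, since each $I$ is nonempty and $I=\bigcup_{x\in I}\{x\}$ (equivalently, by Proposition~\ref{prop:set-cover-formulation}, singletons cover each pair $(I,x)$). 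So $\mu(B_n)\le n$.
\end{itemize}

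The only delicate step is the biconnectivity check, specifically the case of deleting a vertex of size $2\le|A|\le n-1$. Upward paths handle it cleanly: each remaining vertex $S\ne A$ has an upward chain to $X_n$ avoiding $A$. If $S\not\subseteq A$, no superset of $S$ equals $A$. If $S\subsetneq A$, we add an element outside $A$ first. This finishes the case.
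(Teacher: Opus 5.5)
Your proof is correct and follows essentially the same route as the paper. The shared steps are the no-cut-vertex check via upward chains to $X_n$ (adding an element outside the deleted set first) and via singletons when $X_n$ is deleted, the indegree formula $n-|A|$ for the hybrid count, and the forced-singleton argument from $\{x,y\}\cap\{x,z\}=\{x\}$ giving $\mu(B_n)=n$. Your illustrative ``cycle'' $X_n\to X_n\setminus\{1\}\to\{2,3\}$ is not a Hasse path for $n\ne 4$, but you do not need it: a $2$-connected graph on $2^n-1\ge 7$ vertices automatically contains a cycle.
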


\begin{proof}
The underlying undirected graph of $H_n$ is the $n$-cube with the vertex
$\varnothing$ deleted. We first show that this graph has no cut vertex. Delete
a vertex $S\ne X_n$. Every remaining vertex $A$ is connected to $X_n$ while
avoiding $S$: if $A\not\subseteq S$, add the missing elements of $X_n$ one at a
time; if $A\subseteq S$, first add some element of $X_n\setminus S$ and then
add the remaining elements. Thus deleting $S$ leaves the graph connected.
If instead $S=X_n$, every remaining vertex is connected downward to a singleton,
and any two distinct singletons $\{i\}$ and $\{j\}$ are connected through
\(\{i\},\{i,k\},\{k\},\{j,k\},\{j\}\) for some
$k\in X_n\setminus\{i,j\}$. Hence deleting $X_n$ also leaves the graph
connected. Therefore the whole Hasse diagram is one non-trivial block.

Let $K_0:=\{\,\{i\}\mid i\in X_n\,\}$.
If $A,A'\in\Cl(B_n)$ are incompatible, then $I=A\cap A'$ is nonempty. Since
all nonempty subsets of $X_n$ are clusters, the singletons contained in $I$ are
members of $K_0$ and their union is $I$. Hence $K_0$ is a generating set, so
$\mu(B_n)\le n$. Conversely, fix $i\in X_n$ and choose distinct $j,k\ne i$.
The incompatible clusters $\{i,j\}$ and $\{i,k\}$ have intersection $\{i\}$,
so any generating set of $B_n$ contains some $E$ with
$i\in E\subseteq\{i\}$. Hence $E=\{i\}$, and $\mu(B_n)\ge n$.
Thus $\mu(B_n)=n$.

Finally, a vertex $A\subseteq X_n$ has indegree $n-|A|$ in the directed Hasse
diagram. Hence it is a non-root hybrid vertex exactly when
$1\le |A|\le n-2$. The number of such vertices is
\(\sum_{i=1}^{n-2}\binom{n}{i}=2^n-n-2\).
Thus the gap between the Hasse-diagram hybrid count and $\mu(B_n)$ grows
exponentially with $n$.
\end{proof}

\section{Main Results and Consequences}
\label{sec:main-results}

We now state the results before giving their main proofs in
Section~\ref{sec:proofs-main}.  The technical lemmas needed by those proofs are
kept in the appendix.  The first result is the network-to-Hasse lower bound:
any realizing network must contain, in a corresponding network block, enough
non-root hybrid vertices to generate the incompatibility intersections of the
Hasse block.

\begin{proposition}\label{prop:realizing-network-lower-bound}
Let $X$ be a finite nonempty set, let $\mathcal C$ be a cluster system on $X$,
let $H:=\mathcal H[\mathcal C]$, and let $N$ be a rooted network on $X$ with
$C_N=\mathcal C$.  For every non-trivial block $B_H$ of $H$, there exists a
non-trivial block $B_N$ of $N$ such that every cluster in $V(B_H)$ is realized
by some vertex of $B_N$, and $\mu(B_H)\le h(B_N)$.
\end{proposition}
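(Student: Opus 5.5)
The plan has two parts. First, locate a single non-trivial block $B_N$ of $N$ that carries all of $V(B_H)$. Second, show that the descendant clusters of the non-root hybrid vertices of $B_N$ (after a small correction so that they lie in $\Cl(B_H)$) form a generating set of $B_H$. Together these give $\mu(B_H)\le h(B_N)$.

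\emph{Step 1 (localization).} For every cluster $A\in V(B_H)$, fix a vertex $v_A$ of $N$ with $\sigma(v_A)=A$. Every edge $C\to C'$ of $H$ is a cover relation $C'\subsetneq C$ in $\mathcal C$. Hence there is a directed path $v_C\leadsto v_{C'}$ in $N$, and its interior vertices have clusters strictly between $C'$ and $C$. So the interior vertices can only carry the clusters $C$ or $C'$ themselves, and the choice of the $v_A$ can be adjusted accordingly. An undirected cycle of $B_H$ then lifts to a closed walk in $N$ through the chosen vertices. I will use this to show that any two clusters of $B_H$ lying on a common cycle of $H$ have realizing vertices in a common non-trivial block of $N$. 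Since $B_H$ is $2$-connected, every two of its edges lie on a common cycle. Combining this with the fact that two blocks meet in at most one vertex, all the lifted cycles fall into one block $B_N$; this is the content I expect from Prop.~\ref{prop:hasse-block-realized-in-one-network-block}.

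\emph{Step 2 (incompatibilities force hybrids).} Let $A,A'\in V(B_H)$ be incompatible, put $I=A\cap A'$, and fix $x\in I$. Since $\sigma$ is monotone along paths, $v_A$ and $v_{A'}$ are incomparable. Consider the common descendants $w$ of $v_A$ and $v_{A'}$ with $x\in\sigma(w)$, and choose one, $h$, that is $\preceq_N$-maximal. Then $h\ne v_A,v_{A'}$, and $\sigma(h)\subseteq I$. Moreover, $h$ has a parent on a path from $v_A$ and a different parent on a path from $v_{A'}$: otherwise the common parent would be a larger common descendant containing $x$. Hence $h$ is hybrid. The two paths into $h$, together with paths from $\max B_N$ down to $v_A$ and $v_{A'}$, form a cycle through vertices of $B_N$. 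Therefore $h\in V(B_N)$, and $h\ne\max B_N$. This is the role I expect Props.~\ref{prop:incompatibility-facts-realizing-blocks} and~\ref{prop:network-generators-localize} to play. Thus $I=\bigcup\{\sigma(h)\mid h\in H^\times(B_N),\ \sigma(h)\subseteq I\}$.

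\emph{Step 3 (membership in $\Cl(B_H)$, main obstacle).} The family $\{\sigma(h)\mid h\in H^\times(B_N)\}$ has at most $h(B_N)$ members, but a generating set must lie in $\Cl(B_H)=V(B_H)$. So I must show that each $\sigma(h)$ that is actually used is a vertex of $B_H$, or else replace it by one. I will first try to use the Hasse chains $A\to\cdots\to\sigma(h)$ and $A'\to\cdots\to\sigma(h)$, which exist because $\sigma(h)\subseteq A\cap A'$ and $\sigma(h)\in\mathcal C$. Let $D$ be their first common vertex. Then $D\subseteq I$ contains $x$, and the two chains from $A$ and $A'$ to $D$, closed up by a path inside $B_H$ between $A$ and $A'$, form a cycle, so $D\in V(B_H)$. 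If $D\ne\sigma(h)$, I will replace $\sigma(h)$ by $D$. To keep the count at most $h(B_N)$, I need to argue that the replacement is injective in $h$, or is uniform for a fixed $h$ across the pairs $(A,A')$ that use it. Failing that, I will rechoose $h$ as a vertex realizing $D$ inside $B_N$, using Step 1's identification of $V(B_H)$ with vertices of $B_N$ and redoing the maximality argument of Step 2 relative to $D$. Securing this charging map from used hybrids to $V(B_H)$ without increasing the count is where I expect the real difficulty to lie.
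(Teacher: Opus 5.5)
Your Step~2 is correct. The $\preceq_N$-maximal common descendant $h$ of $v_A,v_{A'}$ with $x\in\sigma(h)$ has two distinct parents, and the two paths into $h$ meet only at $h$. Together with paths from a last common ancestor, they form a simple cycle through $v_A$ and $v_{A'}$, so $h\in H^\times(B_N)$ and $x\in\sigma(h)\subseteq I$. This is an elementary substitute for Proposition~\ref{prop:incompatibility-facts-realizing-blocks}(d).

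The genuine gap is Step~3, where you located it yourself. Your vertex $D$ depends on the pair $(A,A')$ and on the chains chosen, so $\sigma(h)\mapsto D$ is not a function of $\sigma(h)$, and nothing bounds the number of replacements by $h(B_N)$. The missing idea is a canonical gate vertex. For $E=\sigma(h)$ contained in some $I\in\mathcal J(B_H)$, set $\tau(E):=E$ if $E\in V(B_H)$. Otherwise let $\tau(E)$ be the first cut vertex after $B_H$ on the path in the block-cut tree of $H$ from $B_H$ to the node containing $E$. Then $\tau(E)\in V(B_H)$, and every path of $H$ from $V(B_H)$ to $E$ passes through $\tau(E)$ (Proposition~\ref{prop:network-generators-localize}(a)). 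Hence for \emph{every} incompatible pair $A,A'\in V(B_H)$ with $E\subseteq A\cap A'$, the downward Hasse paths from $A$ and $A'$ to $E$ both contain $\tau(E)$, giving $E\subseteq\tau(E)\subseteq A\cap A'$. In other words, $\tau(E)$ is the smallest member of $V(B_H)$ containing $E$, and your $D$ always contains it. Because $\tau$ depends on $E$ alone, the family of all such $\tau(\sigma(h))$ is a generating set of $B_H$ with at most $h(B_N)$ members, which finishes the argument. The paper applies the same pull-back to a minimum generating set of $B_N$ rather than to the hybrid clusters, and so obtains the slightly stronger chain $\mu(B_H)\le\mu(B_N)\le h(B_N)$.

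Your sketch of Step~1 is also not sound as written. A Hasse cover $C'\subsetneq C$ does not yield a directed path $v_C\leadsto v_{C'}$, whatever representatives you pick. For instance, take the network on leaves $1,\dots,4$ with arcs $X\to P$, $X\to Q$, $X\to R$, $P\to S$, $P\to 2$, $S\to 1$, $S\to 3$, $Q\to T$, $Q\to 1$, $T\to 2$, $T\to 4$, $R\to 1$, $R\to 2$. It realizes $\mathcal C_2$, yet its unique vertex with cluster $R$ lies below neither $P$ nor $Q$. So Hasse cycles do not lift through representatives, and a lifted closed walk would not certify a common block in any case. The paper obtains this localization in two stages. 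First, an edge-to-block lemma handles incomparable representatives through a common descendant leaf, and comparable ones by reducing to a network edge $u_1\to v_1$ and using incompatible partners inside $B_H$. Second, a block-cut-tree argument lets different Hasse edges use different representatives. If you simply cite Proposition~\ref{prop:hasse-block-realized-in-one-network-block}(b) and take the $v_A$ to be the representatives it supplies in $B_N$, your Step~2 goes through unchanged.
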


The converse direction is constructive.  Given prescribed generating sets in
the original Hasse blocks, the Hasse diagram can be split without changing the
hardwired cluster system until all remaining non-root hybrid labels in each
current block come from the prescribed set.  The split operation and the
original Hasse-label map used in the statement are defined in
Appendix~\ref{app:splitting}.

\begin{theorem}\label{thm:prescribed-generator-realization}
Let $X$ be a finite nonempty set, let $\mathcal C$ be a cluster system on $X$,
and let $H:=\mathcal H[\mathcal C]$.  For each non-trivial block $B$ of $H$,
fix a finite generating set $K_B\subseteq \Cl(B)$.  Then there exists a rooted
network $N_K$ on $X$, obtained from $H$ by splits and carrying the resulting
original Hasse-label map $\lambda$, with $C_{N_K}=\mathcal C$, such that every
non-trivial block $D$ of $N_K$ has an original non-trivial block $B$ of $H$
with $\lambda(v)\in V(B)$ for every $v\in V(D)$, and every non-root hybrid
vertex of $D$ has original Hasse-label in $K_B$. In particular,
$h(D)\le |K_B|$.
\end{theorem}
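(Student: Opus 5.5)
We build $N_K$ from $H$ by repeatedly splitting hybrid vertices. Call a current vertex $v$ \emph{bad} if it is a non-root hybrid vertex of some non-trivial current block $D$ and its label $\lambda(v)$ lies outside the prescribed set $K_B$ for the original block $B$ containing the labels of $D$.

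The split of a vertex $v$ with parents $p_1,\dots,p_r$ replaces $v$ by copies $v^{(1)},\dots,v^{(r)}$. Each copy $v^{(i)}$ receives the single incoming edge $p_i\to v^{(i)}$ and a copy of every outgoing edge of $v$, and each copy inherits $\lambda(v^{(i)})=\lambda(v)$. This is exactly the move $R\mapsto R^{(P)},R^{(Q)}$ in Example~2.

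Since the children of every copy are those of $v$, each $\sigma(v^{(i)})=\sigma(v)$, and no other descendant set changes. Hence $C_N=\mathcal C$ and the leaf set $X$ are preserved, and the result is again acyclic with the same root. We start from $N_0=H$ with $\lambda$ the identity and maintain three invariants:
\begin{itemize}[label={}]
\item (i) $\sigma(v)=\lambda(v)$ for every current vertex $v$;
\item (ii) every edge $u\to v$ of the current network maps to an edge or a path $\lambda(u)\to\lambda(v)$ of $H$ lying inside one original block, namely the same block whenever the edge lies in a non-trivial current block;
\item (iii) original cut vertices are never split.
\end{itemize}

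\textbf{Step 1: original cut vertices are never bad.} If $c$ is a cut vertex of $H$, then any hybrid copy of it is a cut vertex of the current network. Its parents all lie in $B^\uparrow_H$, so it is charged only to the block $B^\uparrow_H$, where it is the non-root vertex. This is a localization statement. Using the rooted block-cut tree of $H$, I would show that every non-trivial current block $D$ maps via $\lambda$ into a single original block $B$. The argument runs as follows.
\begin{itemize}[label={}]
\item Splits only refine cycles: every undirected cycle of the current network projects to a closed walk in $H$.
\item Any closed walk in $H$ that passes through two distinct original blocks must traverse an original cut vertex twice.
\item That cut vertex was never split, by (iii), so it remains a cut vertex of the current network.
\end{itemize}
This gives the first conclusion of the theorem, that every non-trivial block $D$ of $N_K$ has labels in a single original block $B$.

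\textbf{Step 2: progress and termination.} While a bad vertex exists, we choose a bad vertex $v$ that is maximal with respect to $\preceq$ (topmost) and split it. I would use the potential given by the multiset of labels of hybrid vertices whose labels lie outside the relevant $K_B$, processed in a topological order of $H$.
\begin{itemize}[label={}]
\item Splitting $v$ removes one hybrid occurrence of the label $\lambda(v)$.
\item The split can only duplicate vertices strictly below $v$, and duplicating them can only increase the indegree of children of $v$. So the new hybrids have labels strictly below $\lambda(v)$ in $H$.
\item A lexicographic order on label multisets, taken along a reverse topological order of the labels, therefore decreases.
\end{itemize}
Finiteness needs one more argument: a copy of a label is created only per parent-path, and the number of root-to-vertex paths in $H$ is finite. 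This bounds the total number of copies and makes the process terminate.

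\textbf{Step 3: why a bad vertex cannot survive.} This is the main obstacle, and it is where the generating-set hypothesis must enter. At termination, splitting has to have actually disconnected the redundancy rather than merely moved it around. The concern is that a split might leave the copies $v^{(i)}$ still forming a cycle through their shared children, which would make those children, or some other label outside $K_B$, hybrid within the same block.

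The plan is to argue that at termination every non-root hybrid $w$ of $D$ satisfies $\lambda(w)\in K_B$, as follows.
\begin{itemize}[label={}]
\item Take two distinct parents $p,q$ of $w$ inside $D$.
\item Show, using invariant (i) and the $\preceq$-minimality of labels, that $\sigma(p)$ and $\sigma(q)$ can be chosen incompatible, or comparable through an incompatible pair elsewhere in $D$. Then $\lambda(w)\subseteq\sigma(p)\cap\sigma(q)=I\in\mathcal J(B)$.
\item Suppose $\lambda(w)\notin K_B$. Then $I$ is covered by members of $K_B$ contained in $I$, as in Proposition~\ref{prop:set-cover-formulation}. So the paths from $p$ and $q$ down to the leaves of $\lambda(w)$ can be rerouted through vertices labelled by $K_B$.
\item This means a split of $w$ would keep $C_N$ unchanged and would not create new bad vertices, contradicting termination.
\end{itemize}
If this direct contradiction proves too delicate, the fallback is to change the splitting rule: split $w$ only along a partition of its parents induced by the generators of $I$. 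The contradiction is then built into the admissibility of each split.

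The bound $h(D)\le|K_B|$ follows once we also show that distinct non-root hybrids of $D$ carry distinct labels. Suppose two hybrid copies of the same label lay in $D$. Their union of parents would give a cycle, and that cycle would let one copy be merged with, or split from, the other, contradicting minimality of the final configuration.
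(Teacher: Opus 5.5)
There is a genuine gap at invariant (iii). Your overall architecture matches the paper's: label-preserving splits, cluster invariance, localization of current blocks through unsplit original cut vertices, a lexicographic potential, and a path-count bound on copies. However, the claim that original cut vertices are never split is assumed rather than proved, and this is exactly where the generating-set hypothesis does its work.

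Your Step~1 only notes that a hybrid copy of a cut vertex $c$ is charged to $B_H^\uparrow$ as a non-root vertex. By your own definition, that makes it bad unless $c\in K_{B_H^\uparrow}$, which you never show. The assertion that such a copy ``is a cut vertex of the current network'' already presupposes (iii).

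You also overlook non-hybrid cut vertices. Once the unique parent of such a $Z$ is split, every copy keeps the edge to $Z$. So $Z$ becomes hybrid and, if $Z\notin K_{B_H^\uparrow}$, bad.

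Without (iii), localization really fails. Take $\mathcal C=\{X,\{1,2,3\},\{2,3,4\},\{2,3\},\{1\},\{2\},\{3\},\{4\}\}$. Splitting the hybrid cut vertex $\{2,3\}$ fuses its block with the bridges below it and creates unsplittable hybrid leaves $\{2\},\{3\}$ whose labels lie outside $V(B)$. Here every generating set of $B$ must contain $\{2,3\}$, and that is what rules this out.

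The paper supplies (iii) in two steps:
\begin{itemize}
\item Proposition~\ref{prop:shared-in-K}: for parents $p_1,p_2$ of a hybrid cut vertex $Z$ and a leaf $a$ below $Z$ in a child block, a generator $k\ne Z$ with $a\in k\subseteq p_1\cap p_2$ cannot contain $Z$, by the cover edges. It would therefore yield an external ear of $B_H^\uparrow$.
\item Proposition~\ref{prop:original-cut-never-split}: climb the non-hybrid parent chain of $Z$ to a split hybrid $c_0\notin K_{B_H^\uparrow}$. Choose $x\in Z$ such that every cluster of $B_H^\uparrow$ containing $x$ contains $Z$. The generator covering $x$ inside the intersection of two parents of $c_0$ then strictly contains $c_0$ and violates the cover edge into $c_0$.
\end{itemize}
You likewise need that no bad vertex is a leaf, i.e.\ every singleton of $V(B)$ lies in $K_B$ (Proposition~\ref{prop:admissible-bad-hybrid-splits}(b)). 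Otherwise your procedure can stall on a bad vertex that cannot be split.

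Conversely, Step~3 targets a non-problem with an argument that does not work. Suppose bad vertices are splittable, localization holds, and the process terminates. Then ``no bad vertex remains'' \emph{is} the statement that every non-root hybrid of $D$ has label in $K_B$. A surviving $w$ with $\lambda(w)\notin K_B$ would simply be a bad vertex, contradicting the stopping rule directly. The rerouting argument is superfluous, and its claimed consequence, that splitting $w$ would create no new bad vertices, is neither true in general nor relevant.

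The distinct-label claim cannot rest on a ``minimality of the final configuration'', because your procedure never optimizes anything. The correct reason is that a label $x\in K_B$ is never split. If $x$ is an original cut vertex this is (iii). Otherwise $x$ lies in no original block besides $B$, so it is never bad. Hence $x$ has a single current copy, which gives $h(D)\le|K_B|$.

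Finally, your block-dependent definition of ``bad'' presupposes the localization it is meant to support. Defining it purely by labels and indegree, as the paper does, avoids this circularity.
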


Taking minimum generating sets in the preceding theorem and combining it with
the lower bound gives the exact value of the minimum possible level.

\begin{theorem}\label{thm:exact-minimum-level}
Let $X$ be a finite nonempty set, let $\mathcal C$ be a cluster system on $X$,
and let $H:=\mathcal H[\mathcal C]$. Then
\[
\ell(\mathcal C)=
\max\{\,\mu(B)\mid B\text{ is a non-trivial block of }H\,\}.
\]
If $H$ has no non-trivial block, the maximum is interpreted as $0$.
\end{theorem}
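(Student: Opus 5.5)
The proof combines Proposition~\ref{prop:realizing-network-lower-bound} for the inequality $\ell(\mathcal C)\ge\max_B\mu(B)$ with Theorem~\ref{thm:prescribed-generator-realization} for the reverse inequality. Write $m:=\max\{\mu(B)\mid B\text{ a non-trivial block of }H\}$, with $m=0$ if there is no such block.

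\emph{Lower bound.} Let $N$ be any rooted level-$k$ network on $X$ with $C_N=\mathcal C$, and let $B_H$ be a non-trivial block of $H$. By Proposition~\ref{prop:realizing-network-lower-bound} there is a non-trivial block $B_N$ of $N$ with $\mu(B_H)\le h(B_N)$. Since $N$ is level-$k$, we have $h(B_N)\le k$. Hence $\mu(B_H)\le k$ for every non-trivial $B_H$, so $m\le k$. Minimizing over $k$ gives $m\le\ell(\mathcal C)$. This argument also shows that each $\mu(B_H)$ is finite, since $\ell(\mathcal C)$ is finite because $H$ itself realizes $\mathcal C$. If there are no non-trivial blocks, then $m=0\le\ell(\mathcal C)$ holds trivially.

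\emph{Upper bound.} Consider each non-trivial block $B$ of $H$.
\begin{itemize}
\item If $\mathcal J(B)\neq\varnothing$, choose a generating set $K_B\subseteq\Cl(B)$ of minimum size $|K_B|=\mu(B)$. Such a set exists because $\mu(B)<\infty$ by the previous paragraph.
\item If $\mathcal J(B)=\varnothing$, take $K_B=\varnothing$. It is vacuously a generating set, with $|K_B|=0=\mu(B)$.
\end{itemize}
Each $K_B$ is finite, so Theorem~\ref{thm:prescribed-generator-realization} applies. It yields a rooted network $N_K$ on $X$ with $C_{N_K}=\mathcal C$ in which every non-trivial block $D$ is attached to some original block $B$ of $H$ with $h(D)\le|K_B|=\mu(B)\le m$. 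Thus $N_K$ is level-$m$, and therefore $\ell(\mathcal C)\le m$.

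If $H$ has no non-trivial block, the same theorem still applies and all its block conditions are vacuous. Alternatively, in that case $H$ is a tree, so $H$ itself is level-$0$ and realizes $\mathcal C$.

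Combining the two bounds gives equality. The statement is essentially a corollary, so the main obstacle lies not here but inside the two cited results. The only delicate points at this level are:
\begin{itemize}
\item the finiteness of $\mu(B)$, so that a minimum generating set exists; this is handled by deriving the lower bound first;
\item the convention $\mu(B)=0$ when $\mathcal J(B)=\varnothing$, which must be matched by choosing $K_B=\varnothing$.
\end{itemize}
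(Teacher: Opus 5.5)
Your proposal is correct and follows the paper's proof essentially step for step. The lower bound $m\le\ell(\mathcal C)$, together with the finiteness of each $\mu(B)$ obtained by applying it to the realization $H$ itself, comes from Proposition~\ref{prop:realizing-network-lower-bound}; the upper bound comes from Theorem~\ref{thm:prescribed-generator-realization} applied to minimum generating sets, and your separate choice $K_B=\varnothing$ when $\mathcal J(B)=\varnothing$ is harmless and consistent with the convention $\mu(B)=0$.
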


Equivalently, the fixed-level decision problem is decided block by block in the
Hasse diagram.

\begin{corollary}\label{cor:level-k-realizability}
Let $X$ be a finite nonempty set, let $\mathcal C$ be a cluster system on $X$,
and let $H:=\mathcal H[\mathcal C]$. Then there exists a rooted level-$k$
network $N$ on $X$ such that $C_N=\mathcal C$ if and only if every non-trivial
block $B$ of $H$ satisfies $\mu(B)\le k$.
\end{corollary}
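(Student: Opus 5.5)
The corollary should follow directly from Theorem~\ref{thm:exact-minimum-level}. The one subtlety is that level-$k$ is monotone in $k$, and $\mu(B)$ might be $\infty$. I would prove the two directions separately, using Proposition~\ref{prop:realizing-network-lower-bound} for necessity and Theorem~\ref{thm:prescribed-generator-realization} for sufficiency. This route avoids having to argue that the minimum in the definition of $\ell(\mathcal C)$ is attained in a form compatible with $k$.

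\emph{Necessity.} Suppose $N$ is a rooted level-$k$ network on $X$ with $C_N=\mathcal C$, and let $B$ be a non-trivial block of $H$. Proposition~\ref{prop:realizing-network-lower-bound} gives a non-trivial block $B_N$ of $N$ with $\mu(B)\le h(B_N)$. Since $N$ is level-$k$, we also have $h(B_N)\le k$. Hence $\mu(B)\le k$. In particular $\mu(B)$ is finite.

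\emph{Sufficiency.} Suppose $\mu(B)\le k$ for every non-trivial block $B$ of $H$. Each such $\mu(B)$ is then finite, so I would proceed in two cases.
\begin{itemize}
\item If $\mathcal J(B)\neq\varnothing$, fix a generating set $K_B\subseteq\Cl(B)$ with $|K_B|=\mu(B)$.
\item If $\mathcal J(B)=\varnothing$, take $K_B=\varnothing$. This is a generating set vacuously, and $|K_B|=0=\mu(B)$.
\end{itemize}
Theorem~\ref{thm:prescribed-generator-realization} then supplies a network $N_K$ on $X$ with $C_{N_K}=\mathcal C$. Every non-trivial block $D$ of $N_K$ has an associated original block $B$ of $H$ with $h(D)\le |K_B|=\mu(B)\le k$. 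So $N_K$ is level-$k$.

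If $H$ has no non-trivial block, the condition is vacuous. In that case the Hasse diagram itself is a realizing network with no non-trivial blocks, hence level-$k$ for every $k\ge 0$, consistent with the convention in Theorem~\ref{thm:exact-minimum-level}.

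There is no real obstacle here. The only points needing care are the handling of $\mu(B)=\infty$, which the necessity direction excludes automatically, and the empty-$\mathcal J(B)$ case, where the finite generating set required by Theorem~\ref{thm:prescribed-generator-realization} is taken to be empty.
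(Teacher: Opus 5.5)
Your proof is correct and follows the paper's route. The paper deduces the corollary immediately from Theorem~\ref{thm:exact-minimum-level}, whose proof combines exactly Proposition~\ref{prop:realizing-network-lower-bound} and Theorem~\ref{thm:prescribed-generator-realization} as you do, so you have inlined that argument with threshold $k$ in place of $\max_B\mu(B)$, which also bypasses the trivial monotonicity of level in $k$; your separate case $\mathcal J(B)=\varnothing$ is harmless but never occurs, since any terminal vertex of a non-trivial Hasse block has two incomparable, hence incompatible, parents in that block.
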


The proof is organized into two main packages. The lower-bound package shows that
each non-trivial Hasse block is represented inside a single network block and
that the non-root hybrid vertices of that block generate all required
incompatibility intersections. The upper-bound package starts from the Hasse
diagram and repeatedly splits bad hybrid vertices; its key invariants are
cluster preservation, block localization, and termination.

\subsection{Why the Block Parameter is the Correct Threshold}

The point of $\mu(B)$ is not just that it gives another lower bound.  It is the
blockwise threshold for level realizability because it is both forced by every
realization and sufficient for a realization after the prescribed splitting
construction.  This distinguishes it from standard network-side measures of
reticulation complexity, which are defined only after a realizing network has
been chosen.

Let $B_H$ be a non-trivial block of $H$, and let $N$ be a rooted network with
$C_N=\mathcal C$. By Proposition~\ref{prop:hasse-block-realized-in-one-network-block}(b), the block $B_H$ is
represented inside some non-trivial block $B_N$ of $N$. The necessity argument
gives $\mu(B_H)\le h(B_N)$. Thus $\mu(B_H)$ is a lower bound, depending only on
the cluster system, for the number of non-root hybrid vertices required in any
realizing block. Intuitively, each incompatibility-intersection in $B_H$ must be
explained in the network by descendant clusters of non-root hybrid vertices; the
generating set measures how many such pieces are forced.

This inequality need not be an equality for a fixed realization. A realization
may contain redundant hybrid vertices that do not create new hardwired clusters,
and then one can have $\mu(B_H)<h(B_N)$. For instance, in
Figure~\ref{fig:example-mu-one}, the Hasse diagram itself has
$\mu(B_1)=h(B_1)=1$. A non-minimal realization could contain additional
hybrid vertices with the same descendant cluster, leaving the cluster system
unchanged but increasing $h(B_N)$.

The prescribed-generator theorem shows that this lower bound is sharp after
choosing the realization appropriately. Corollary~\ref{cor:level-k-realizability}
states that a level-$k$ realization exists exactly when $\mu(B)\le k$ for every
non-trivial block $B$ of $H$. Thus $\mu(B)$ gives the correct blockwise
threshold, even though arbitrary realizations may contain more hybrid vertices
than necessary.

If one uses the standard reticulation number
$r(N):=\sum_{v\in V(N)}(\indeg_N(v)-1)$, then it measures excess indegree rather
than the number of hybrid vertices. Locally, for a non-trivial block $B_N$, one
may consider
\[
r_{\mathrm{loc}}(B_N):=
\sum_{h\in H^\times(B_N)}(\indeg_N(h)-1).
\]
Since every non-root hybrid vertex has indegree at least two, one has
$h(B_N)\le r_{\mathrm{loc}}(B_N)$, and hence, for a realizing block associated
with $B_H$, $\mu(B_H)\le h(B_N)\le r_{\mathrm{loc}}(B_N)$. In binary settings the
last two quantities coincide locally, while in non-binary settings the
reticulation contribution may be larger.

The level of a network is a bound on $h(B_N)$ over all non-trivial blocks
$B_N$ of that network. Hence level is network-dependent. By contrast, $\mu(B)$
is computed before constructing any network. Theorem~\ref{thm:exact-minimum-level}
converts the network-side minimum-level problem into the Hasse-diagram formula
$\ell(\mathcal C)=\max_B\mu(B)$.

Proposition~\ref{prop:set-cover-formulation} gives a finite set-cover
formulation of $\mu(B)$. We use this only as a finite combinatorial formulation;
we do not claim a polynomial-time algorithm for computing $\mu(B)$, and leave
the complexity classification for future work. It is also
not merely the number of hybrid vertices already present in the Hasse diagram:
Proposition~\ref{prop:boolean-lattice-gap} gives a family with
$\mu(B_n)=n$ but with $2^n-n-2$ non-root hybrid vertices in the Hasse diagram
block.

In summary, $\mu(B)$ turns a network-level requirement into a cluster-system
condition. It detects, directly from $\mathcal H[\mathcal C]$, how much
blockwise reticulation complexity is forced in any level-$k$ realization.

\subsection{Computational consequences}

The set-cover formulation gives a practical way to treat
$\mu(B)$ as a data-side complexity score.  For each Hasse block $B$, the
universe consists of pairs $(I,x)$ where $I$ is an incompatibility intersection
and $x\in I$; a candidate cluster piece $E\in\Cl(B)$ covers $(I,x)$ precisely
when $x\in E\subseteq I$.  Thus $\mu(B)$ can be computed, searched, or
approximated from the cluster system alone, before any network is constructed.

For small or moderately sized blocks, this finite set-cover instance can
be solved exactly by enumeration or integer programming.  For larger blocks,
standard set-cover heuristics can be used to obtain interpretable upper bounds
on the forced local complexity, while the theorem identifies the exact value
whenever the optimum is certified.  Hence the main result is not only an
existence theorem for level-$k$ realizations; it also supplies a route for
turning hardwired cluster data into a blockwise reticulation-complexity
profile.

\section{Proofs of the Main Results}
\label{sec:proofs-main}

We now prove the four results stated in Section~\ref{sec:main-results}.  The
proofs use the lower-bound tools from Appendix~\ref{app:lower-bound} and the
splitting construction, cut-vertex localization, and termination results from
Appendices~\ref{app:splitting}--\ref{app:termination}.  Keeping these proofs in
the main text makes the structure of the argument explicit: the appendix supplies
the technical lemmas, while the body shows how those lemmas combine to give the
minimum-level formula.

\begin{proof}[Proof of Proposition~\ref{prop:realizing-network-lower-bound}]
Let $B_H$ be a non-trivial block of $H$.

By Proposition~\ref{prop:hasse-block-realized-in-one-network-block}(b), there
exists a non-trivial block $B_N$ of $N$ such that every cluster in $V(B_H)$ is
realized by a vertex of $B_N$. Proposition~\ref{prop:network-generators-localize}(b)
then gives $\mu(B_H)\le \mu(B_N)\le h(B_N)$.
\end{proof}

\begin{proof}[Proof of Theorem~\ref{thm:prescribed-generator-realization}]
For each non-trivial block $B$ of $H$, keep the prescribed finite generating
set $K_B\subseteq \Cl(B)$.

Start with the relabeled Hasse diagram described in the initial-network
convention and set $N_0:=H$. Then $N_0$ is a rooted network on leaf set $X$, and
$C_{N_0}=\mathcal C$.

Iteratively do the following. If some original block $B_i$ satisfies
$H_{B_i}^{\mathrm{bad}}(N)\neq\varnothing$, choose a bad hybrid vertex of
lexicographically minimal key; equivalently, choose the smallest index $i$ with
$H_{B_i}^{\mathrm{bad}}(N)\neq\varnothing$ and then a vertex
$u\in H_{B_i}^{\mathrm{bad}}(N)$ of minimum depth in $B_i$. By
Proposition~\ref{prop:admissible-bad-hybrid-splits}(d), $u$ is a valid split
source; split $u$.

By Proposition~\ref{prop:original-cut-never-split}, whenever a vertex $u$ is
split, its original label $\lambda(u)$ is not a cut vertex of the original
Hasse diagram. Hence Proposition~\ref{prop:block-refinement} applies at every
step. By repeated application of Proposition~\ref{prop:splitting-invariants}(a),
every intermediate graph is a rooted network with leaf set $X$ and cluster
system $\mathcal C$. By Proposition~\ref{prop:termination-final-block-control}(a),
the process stops after finitely many steps. Let $N^\ast$ be the resulting
network. Then $C_{N^\ast}=C_{N_0}=\mathcal C$.

It remains to control the non-trivial blocks of $N^\ast$. First, every
non-trivial block of $N^\ast$ is contained in $N^\ast\langle B\rangle$ for some
original non-trivial block $B$ of $H$. This is true for $N_0=H$, since its
non-trivial blocks are exactly the original ones, and the induction step is
Proposition~\ref{prop:block-refinement}(3).

Now let $D$ be any non-trivial block of $N^\ast$, and choose $B$ with
$D\subseteq N^\ast\langle B\rangle$. If $w$ is a non-root hybrid vertex of $D$,
then $\lambda(w)\in V(B)$ and, because the construction has stopped,
$\lambda(w)\in K_B\cup\{\max B\}$. Proposition~\ref{prop:termination-final-block-control}(b)
excludes $\lambda(w)=\max B$, so $\lambda(w)\in K_B$.

By Proposition~\ref{prop:block-refinement}, each element of $K_B$ has a unique
current copy in $N^\ast$. Thus distinct non-root hybrid vertices of $D$ have
distinct labels in $K_B$, and their number is at most $|K_B|$.

Set $N_K:=N^\ast$. The containment $D\subseteq N^\ast\langle B\rangle$ proves
that $\lambda(v)\in V(B)$ for every $v\in V(D)$, and the preceding paragraphs
prove the stated non-root hybrid label control.
Proposition~\ref{prop:splitting-invariants}(a) gives $C_{N_K}=\mathcal C$.
\end{proof}

\begin{proof}[Proof of Theorem~\ref{thm:exact-minimum-level}]
If $H$ has no non-trivial block, then the relabeled Hasse diagram is a rooted
level-$0$ network with cluster system $\mathcal C$, so $\ell(\mathcal C)=0$.
Assume now that $H$ has at least one non-trivial block, and let
$m:=\max\{\,\mu(B)\mid B\text{ is a non-trivial block of }H\,\}$.

First, $m$ is finite. Indeed, the relabeled Hasse diagram itself is a finite
rooted network realizing $\mathcal C$; applying
Proposition~\ref{prop:realizing-network-lower-bound} to this realization shows
that each $\mu(B)$ is bounded by the finite level of $H$.

Let $N$ be any rooted level-$k$ network with $C_N=\mathcal C$. For every
non-trivial block $B$ of $H$, Proposition~\ref{prop:realizing-network-lower-bound}
gives a non-trivial block $B_N$ of $N$ with $\mu(B)\le h(B_N)\le k$.
Hence $m\le k$ for every such realization, and therefore
$m\le \ell(\mathcal C)$.

Conversely, for each non-trivial block $B$ of $H$, choose a minimum generating
set $K_B\subseteq \Cl(B)$ with $|K_B|=\mu(B)$.  Applying
Theorem~\ref{thm:prescribed-generator-realization} to these sets gives a rooted
network $N_K$ with $C_{N_K}=\mathcal C$ such that every non-trivial block $D$ of
$N_K$ satisfies $h(D)\le |K_B|=\mu(B)\le m$ for some original non-trivial
block $B$ of $H$. Thus $N_K$ is level-$m$, so
$\ell(\mathcal C)\le m$.

The two inequalities give the claimed equality.
\end{proof}

\begin{proof}[Proof of Corollary~\ref{cor:level-k-realizability}]
This is immediate from Theorem~\ref{thm:exact-minimum-level}.
\end{proof}

\appendix

\section{Lower Bound: From Realizing Networks Back to Hasse Blocks}
\label{app:lower-bound}

This section collects the tools used in the necessity direction of
Proposition~\ref{prop:realizing-network-lower-bound}.  The aim is to pass from a realizing
network back to the Hasse diagram: incompatibility intersections in a Hasse block are
first located inside a single network block, and then network-side generators
are pulled back to generators in the original Hasse block.  The proof is
organized in three steps.  First, incompatible clusters force their
representatives into one non-trivial network block.  Second, the non-root
hybrid vertices of that block generate the relevant incompatibility
intersections.  Third, the network-side generators are localized back to the
original Hasse block without increasing their number.

\subsection{Compatibility tools inside blocks}

The first package records the incompatibility facts used throughout the lower
bound.  It combines the elementary common-descendant cycle argument with the
standard structural facts from \citet{HellmuthSchallerStadler2023}.

\begin{proposition}
\label{prop:incompatibility-facts-realizing-blocks}
Let $N$ be a rooted network on $X$ with $C_N=\mathcal C$, and put
$H=\mathcal H[\mathcal C]$.
\begin{enumerate}[label=(\alph*)]
\item If $x,y\in V(N)$ are incomparable and have a common descendant leaf, then
$x$ and $y$ lie in a common non-trivial block of $N$.
\item If $a,b\in V(N)$ and $\sigma(a),\sigma(b)$ are incompatible, then $a$
and $b$ lie in the interior of a common non-trivial block of $N$.
\item If $B_H$ is a non-trivial block of $H$ and $V\in B_H^0$, then $V$ has an
incompatible partner in $B_H$.
\item If $B_N$ is a non-trivial block of $N$ and $A,A'\in\Cl(B_N)$ are
incompatible, then there exists $H_{A,A'}\subseteq H^\times(B_N)$ such that
$A\cap A'=\bigcup_{h\in H_{A,A'}}\sigma(h)$.
\end{enumerate}
\end{proposition}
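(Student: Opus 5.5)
The plan is to prove the four parts in order, each from the previous ones, using only directed-path arguments in the DAG $N$ and citing \citet{HellmuthSchallerStadler2023} for the Hasse-side fact (c).

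For (a), I would build an explicit undirected cycle through $x$ and $y$; this places them in a common non-trivial block.
\begin{enumerate}[label=(\roman*)]
\item Downward paths. Let $z$ be a common descendant leaf, and fix directed paths $P\colon x\leadsto z$ and $Q\colon y\leadsto z$. Let $v$ be the first vertex of $P$ that lies on $Q$, and truncate both paths at $v$. Then the segments $x\leadsto v$ and $y\leadsto v$ meet only in $v$. Also $v\neq x,y$, because $x$ and $y$ are incomparable.
\item Upward paths. Fix directed paths $\rho_N\leadsto x$ and $\rho_N\leadsto y$. Let $r$ be the last vertex of the first path that lies on the second. Then the segments $r\leadsto x$ and $r\leadsto y$ are internally disjoint, and $r\neq x,y$.
\item Disjointness across the two halves. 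Suppose a vertex $w$ lay on both $r\leadsto x$ and $y\leadsto v$. Then $x\preceq_N w\preceq_N y$, contradicting incomparability; the same argument handles $r\leadsto y$ against $x\leadsto v$. Suppose instead $w\neq x$ lay on both $r\leadsto x$ and $x\leadsto v$. Then $w\prec_N x\prec_N w$, which is impossible by acyclicity.
\end{enumerate}
Hence the four segments form a simple undirected cycle containing $x$ and $y$. A cycle lies in a single block, so $x$ and $y$ share a non-trivial block.

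For (b), first note that $a$ and $b$ are incomparable: $b\preceq_N a$ would give $\sigma(b)\subseteq\sigma(a)$, and symmetrically. They also share a leaf, since $\sigma(a)\cap\sigma(b)\neq\varnothing$. So (a) yields a block $B$ together with the cycle constructed above.
\begin{itemize}
\item $a$ is not $\max B$. Otherwise $b\preceq_N a$ by the block-root lemma, which is excluded.
\item $a$ is not terminal in $B$. The vertex $v$ of the cycle lies in $B$ and satisfies $v\prec_N a$.
\end{itemize}
So $a\in B^0$, and by symmetry $b\in B^0$. Part (c) is the structural statement of \citet{HellmuthSchallerStadler2023} that every interior vertex of a Hasse block lies in an incompatible pair within that block; I would cite it.

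For (d), write $A=\sigma(a)$ and $A'=\sigma(a')$ with $a,a'\in V(B_N)$. It suffices to find, for each $x\in A\cap A'$, some $h\in H^\times(B_N)$ with $x\in\sigma(h)\subseteq A\cap A'$; taking $H_{A,A'}$ to be the set of all such $h$ then gives the equality.
\begin{enumerate}[label=(\roman*)]
\item Choosing $h$. Since $a$ and $a'$ are incomparable, run the construction of (a) with the leaf $x$ in place of $z$. Let $h$ be the merge vertex $v$.
\item $h$ has the required cluster. It is a common descendant of $a$ and $a'$, so $\sigma(h)\subseteq A\cap A'$, and $x\in\sigma(h)$ because $h\leadsto x$.
\item $h$ is a non-root hybrid of $B_N$. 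Its two predecessors on the disjoint segments are distinct parents, so $h$ is hybrid. It lies on a cycle through $a$, so it lies in the block of $a$, which is $B_N$. It is not $\max B_N$, since $h\prec_N a$.
\end{enumerate}

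The main obstacle is the claim in (iii) that the cycle lies in the given $B_N$ rather than in some other block. Blocks meet in at most one vertex, and $a$ and $a'$ both lie in $B_N$, so a cycle through both cannot leave $B_N$; I would make this explicit. If some pair $a,a'$ realizing $A,A'$ in $B_N$ were comparable, this would contradict incompatibility, so no further care is needed there.
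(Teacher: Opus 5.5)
Your proposal is correct. Part (a) is the paper's own cycle construction, and for (c) you cite \citet{HellmuthSchallerStadler2023} just as the paper does; the paper specifies the citation as semi-regularity of $H$ together with their Lemma~28. You depart from the paper in (b) and (d). The paper imports these from Lemma~19 and from Lemmas~20 and~11 of \citet{HellmuthSchallerStadler2023}, whereas you derive both directly from the explicit cycle of (a). In (b), the block-root lemma rules out $a=\max B$, and the merge vertex $v\prec_N a$ on the cycle rules out terminality. In (d), the merge vertex over a leaf $x\in A\cap A'$ has two distinct parents on the cycle. The cycle passes through the two distinct vertices $a,a'\in V(B_N)$, and distinct blocks share at most one vertex, so the cycle lies in $B_N$. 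Your route makes the lower-bound package self-contained apart from (c), and it shows exactly why the required hybrids sit inside $B_N$ below both representatives. The paper's route is shorter and quotes a more general fact. That fact is the trichotomy $\sigma(u)\cap\sigma(v)\in\{\sigma(u),\sigma(v),\bigcup_{h\in H_{u,v}}\sigma(h)\}$ for vertices $u,v$ of $B_N$, with $H_{u,v}$ the set of all non-root hybrids of $B_N$ below both. Your argument recovers that canonical choice in the incompatible case too: your merge vertices lie in $H_{u,v}$, and every member of $H_{u,v}$ has cluster contained in $A\cap A'$.
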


\begin{proof}
For (a), let $\rho$ be the root. Choose directed paths $R_x:\rho\leadsto x$
and $R_y:\rho\leadsto y$. Let $u$ be their last common vertex, and let
$P_x:u\leadsto x$ and $P_y:u\leadsto y$ be the terminal subpaths after $u$.
Then $V(P_x)\cap V(P_y)=\{u\}$. Let $a$ be a common descendant leaf of $x$ and
$y$. Choose directed paths $S_x:x\leadsto a$ and $S_y:y\leadsto a$. Let $v$ be
their first common vertex, and let $Q_x:x\leadsto v$ and $Q_y:y\leadsto v$ be
the initial subpaths up to $v$. Then $V(Q_x)\cap V(Q_y)=\{v\}$.

The four paths $P_x,Q_x,Q_y,P_y$ form a simple undirected cycle. Extra
intersections on the same side would create a directed cycle, while a
cross-intersection would make $x$ and $y$ comparable. Hence $x$ and $y$ lie in
a common non-trivial block.

Part (b) is Lemma~19 of \citet{HellmuthSchallerStadler2023}. For (c), by
Proposition~2 and Theorem~2 of \citet{HellmuthSchallerStadler2023}, $H$ is
semi-regular, and Lemma~28 of that paper gives the required incompatible
partner.

For (d), choose $u,v\in V(B_N)$ with $\sigma(u)=A$ and $\sigma(v)=A'$. By
Lemma~20 of \citet{HellmuthSchallerStadler2023}, if
\[
H_{u,v}:=\{\,h\mid h \text{ is a non-root hybrid vertex of } B_N
\text{ and } h\preceq_N u,v\,\},
\]
then
\[
\sigma(u)\cap\sigma(v)\in
\left\{\sigma(u),\sigma(v),\bigcup_{h\in H_{u,v}}\sigma(h)\right\}.
\]
Since $A$ and $A'$ are incompatible, the first two cases are impossible. By
Lemma~11 of \citet{HellmuthSchallerStadler2023}, $H_{u,v}\subseteq
H^\times(B_N)$. Thus $H_{A,A'}:=H_{u,v}$ has the stated property.
\end{proof}

\subsection{Realizing a Hasse block inside one network block}

We now relate non-trivial blocks of the Hasse diagram to non-trivial blocks of
a realizing network.  The edge-to-block realization lemma is the main local
step: each Hasse edge in a non-trivial Hasse block can be represented inside
one non-trivial block of the realizing network.  The following block-cut-tree
and ear arguments then upgrade this edgewise statement to an entire Hasse
block.

\begin{proposition}\label{prop:edge-to-block-realization}
Let $H=\mathcal H[\mathcal C]$, let $B_H$ be a non-trivial block of $H$, and let $U\to V$ be an edge of $B_H$. Let $N$ be a rooted network with $C_N=\mathcal C$. Then there exists a non-trivial block $B_N$ of $N$ and vertices $u,v\in V(B_N)$ such that $\sigma(u)=U$ and $\sigma(v)=V$.
\end{proposition}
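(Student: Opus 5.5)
The plan is to find an actual network edge $u'\to v'$ with $\sigma(u')=U$ and $\sigma(v')=V$, and then to show that this edge is not a bridge of the underlying undirected graph of $N$. A non-bridge edge lies on an undirected cycle, so it lies in a non-trivial block $B_N$, and $u:=u'$, $v:=v'$ are the required vertices.

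\emph{Step 1 (a witness cluster $W$).} I would first show there is a cluster $W\in V(B_H)$ with $W\cap V\neq\varnothing$, $W\not\subseteq V$ and $U\not\subseteq W$. There are two cases.
\begin{itemize}
\item If $V\in B_H^0$, Proposition~\ref{prop:incompatibility-facts-realizing-blocks}(c) gives an incompatible partner $W$ of $V$ in $B_H$. Then $W\cap V\neq\varnothing$ and $W\not\subseteq V$. Also $V\not\subseteq W$, and since $V\subsetneq U$, this gives $U\not\subseteq W$.
\item Otherwise $V$ is not $\max B_H$, because it has the parent $U$ in $B_H$, so $V$ is terminal. Lemma~\ref{lem:terminal-two-parents} gives a second parent $W\neq U$ of $V$ in $B_H$. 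Both $U$ and $W$ cover $V$ in the Hasse order, so neither contains the other; otherwise one cover relation would fail. They share $V$, so $U$ and $W$ are incompatible, and $W$ has the three properties.
\end{itemize}

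\emph{Step 2 (the non-bridge argument).} Suppose $u'\to v'$ is an edge of $N$ with $\sigma(u')=U$ and $\sigma(v')=V$, and suppose it is a bridge. Let $S$ be the component containing $v'$ after deleting the edge. Then $S$ does not contain $\rho_N$, since otherwise the edge would lie on a cycle. Every vertex of $S$ is reached from $\rho_N$ only through $u'\to v'$, so every vertex of $S$ lies $\preceq_N v'$.

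Take $w$ with $\sigma(w)=W$. If $w\in S$, then $W\subseteq V$, which is excluded. If $w\notin S$, pick a leaf $x\in W\cap V$. The leaf $x$ lies in $S$, because every directed path from $v'$ to $x$ stays in $S$. Hence any path $w\leadsto x$ must use the edge $u'\to v'$, which gives $w\succeq_N u'$ and so $U\subseteq W$, also excluded. So the edge is not a bridge.

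\emph{Step 3 (the main obstacle).} The remaining task is to produce a genuine edge $u'\to v'$ with $\sigma(u')=U$ and $\sigma(v')=V$, and I expect this to be the hard step. The key observation is that along any directed path $u\leadsto v$ with $\sigma(u)=U$ and $\sigma(v)=V$, every intermediate cluster $C$ satisfies $V\subseteq C\subseteq U$ and lies in $\mathcal C$. Since $U\to V$ is a Hasse cover, $C\in\{U,V\}$, and the last $U$-to-$V$ transition on the path is the desired edge. So it suffices to find comparable realizations $v\preceq_N u$.

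My first attempt would be as follows. Fix $v$ with $\sigma(v)=V$ and climb through parents whose clusters equal $V$ until reaching a parent $p$ with $\sigma(p)\supsetneq V$. Then run Step 2 for the edge $p\to v$, using the same $W$, to see whether $\sigma(p)$ can be forced to equal $U$. This uses $U\not\subseteq W$ together with the analogous information about the second parent in the terminal case. If that fails, I would fall back on the Hellmuth--Schaller--Stadler structural lemmas, as used in Proposition~\ref{prop:incompatibility-facts-realizing-blocks}, to show that every Hasse cover inside a non-trivial block of $H$ is realized by comparable vertices of $N$. Once the edge $u'\to v'$ is in hand, Steps 1 and 2 finish the proof.
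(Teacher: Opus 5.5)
Your Steps 1 and 2 are sound. The witness $W$ is correctly produced in both cases. The cut argument is also correct: if $u'\to v'$ were a bridge, everything on the $v'$-side is $\preceq_N v'$, so a realization $w$ of $W$ either gives $W\subseteq V$ or reaches a leaf of $W\cap V$ only through $u'$, giving $U\subseteq W$. This shows that any network edge realizing $U\to V$ lies on a cycle.

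The gap is Step~3, which sets out to prove something false. A Hasse cover $U\to V$ in a non-trivial block need not be realized by any network edge, nor by any comparable pair of vertices. Take $X=\{1,2,3,4\}$, $U=\{1,2,3\}$, $W=\{1,2,4\}$, $V=\{1,2\}$, and $\mathcal C=\{X,U,W,V,\{1\},\{2\},\{3\},\{4\}\}$. Let $N$ have edges $\rho\to u$, $\rho\to w$, $u\to 1$, $u\to 2$, $u\to 3$, $w\to v$, $w\to 4$, $v\to 1$, $v\to 2$. Then $C_N=\mathcal C$, and $U\to V$ lies in the non-trivial Hasse block $\{X,U,W,V\}$. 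But $u$ and $v$ are the only realizations of $U$ and $V$, and they are incomparable. Your climbing attempt from $v$ reaches the parent $w$ with $\sigma(w)=W\neq U$. Your fallback claim, that every such cover is realized by comparable vertices, is refuted by this example.

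The missing idea is the comparability dichotomy that opens the paper's proof. Take arbitrary realizations $u_0,v_0$ of $U,V$.
\begin{itemize}
\item If they are incomparable, every leaf of $V$ is a common descendant. Proposition~\ref{prop:incompatibility-facts-realizing-blocks}(a) then puts them in a common non-trivial block directly, and no edge is needed.
\item If they are comparable, then $v_0\preceq_N u_0$, since the reverse would give $U\subseteq V$. Your path observation then produces the edge $u'\to v'$, and your Steps 1--2 finish.
\end{itemize}
With this repair your argument is complete. Your Steps 1--2 also handle the comparable case more economically than the paper. The paper splits into Subcases 1a, 1b and Case 2, and glues block paths from Proposition~\ref{prop:incompatibility-facts-realizing-blocks}(a),(b) into a connected subgraph of $N-u_1v_1$. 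Your single cut argument with one witness $W$ covers all of these cases at once.
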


\begin{proof}
Choose $u_0,v_0\in V(N)$ with $\sigma(u_0)=U$ and $\sigma(v_0)=V$. Since
$U\to V$ is a cover edge in the Hasse diagram, $V\subsetneq U$ and there is no
cluster of $\mathcal C$ strictly between $U$ and $V$.

The proof first reduces to a genuine network edge whose endpoint clusters are
$U$ and $V$. If $u_0$ and $v_0$ are incomparable, then every leaf in $V$ is a
common descendant leaf of both vertices, so
Proposition~\ref{prop:incompatibility-facts-realizing-blocks}(a)
already puts them in a common non-trivial block. Thus assume that they are
comparable. The relation $v_0\leadsto u_0$ would imply
$U=\sigma(u_0)\subseteq\sigma(v_0)=V$, impossible; hence $u_0\leadsto v_0$.
Along a directed path
\(u_0=x_0\to x_1\to\cdots\to x_m=v_0\),
the clusters form a descending chain between $U$ and $V$. By the cover
property, every cluster on this path is either $U$ or $V$. Therefore, for some
$j$, there is an edge $u_1\to v_1$ with
$\sigma(u_1)=U$ and $\sigma(v_1)=V$.

It remains to show that the edge $u_1v_1$ lies on an undirected cycle.  In
each case below we build a connected subgraph of $N-u_1v_1$ containing both
$u_1$ and $v_1$. We use the following elementary claim: if $e=xy$ is an edge
of an undirected graph $G$ and a connected subgraph of $G-e$ contains $x$ and
$y$, then $x$ and $y$ lie in a common non-trivial block of $G$. Indeed, the
connected subgraph contains an $x$--$y$ path avoiding $e$, and this path
together with $e$ contains a cycle through $x$ and $y$.

\emph{Case 1: $V$ has an incompatible partner in $B_H$.}
Choose $Z\in V(B_H)$ that is incompatible with $V$, and let $z\in V(N)$ satisfy $\sigma(z)=Z$. By Proposition~\ref{prop:incompatibility-facts-realizing-blocks}(b), $v_1$ and $z$ lie in the interior of a common non-trivial block $B_V$, so there is an undirected path $P_V$ in $B_V$ from $z$ to $v_1$.

\emph{Subcase 1a: $U$ is incompatible with $Z$.}
Again by Proposition~\ref{prop:incompatibility-facts-realizing-blocks}(b), $u_1$ and $z$ lie in the interior of a common non-trivial block $B_U$, so there is an undirected path $P_U$ in $B_U$ from $u_1$ to $z$. If $P_V$ passes through $u_1$, then $u_1,v_1\in B_V$; if $P_U$ passes through $v_1$, then $u_1,v_1\in B_U$. Otherwise $M:=P_U\cup P_V$ is connected, contains $u_1$ and $v_1$, and avoids the edge $u_1v_1$. The claim puts $u_1$ and $v_1$ in a common non-trivial block.

\emph{Subcase 1b: $U$ is compatible with $Z$.}
Since $V\subsetneq U$ and $V$ is incompatible with $Z$, we have $U\cap Z\neq\varnothing$. Thus $U$ and $Z$ are comparable. They cannot satisfy $U\subseteq Z$, since then $V\subseteq Z$, contradicting that $V$ is incompatible with $Z$. Hence $Z\subsetneq U$.

If $u_1$ and $z$ are incomparable, then any leaf in $Z$ is a common descendant leaf of both, so Proposition~\ref{prop:incompatibility-facts-realizing-blocks}(a) yields a non-trivial block $B'$ containing them and an undirected path $P'$ in $B'$ from $u_1$ to $z$. If $P'$ passes through $v_1$, then $u_1,v_1\in B'$; if $P_V$ passes through $u_1$, then $u_1,v_1\in B_V$. Otherwise $M:=P'\cup P_V$ is connected, contains $u_1,v_1$, and avoids $u_1v_1$, so the claim applies.

If $u_1$ and $z$ are comparable, then $z\leadsto u_1$ is impossible because it would imply $U\subseteq Z$. Hence $u_1\leadsto z$; let $R$ be such a path. It cannot pass through $v_1$, since that would force $Z\subseteq V$, again contradicting that $V$ is incompatible with $Z$. If $P_V$ passes through $u_1$, then $u_1,v_1\in B_V$; otherwise $M:=R\cup P_V$ is connected, contains $u_1,v_1$, and avoids $u_1v_1$, so the claim yields the desired block.

\emph{Case 2: $V$ has no incompatible partner in $B_H$.}
By Proposition~\ref{prop:incompatibility-facts-realizing-blocks}(c), $V\notin B_H^0$. Since also $V\neq\max B_H$, the vertex $V$ is terminal in $B_H$. Lemma~\ref{lem:terminal-two-parents} therefore gives another parent $W\neq U$ with $W\to V$ in $B_H$. Then $U$ and $W$ both properly contain $V$ and cannot be comparable, so they are incompatible.

Choose $w_1\in V(N)$ with $\sigma(w_1)=W$. By Proposition~\ref{prop:incompatibility-facts-realizing-blocks}(b), $u_1$ and $w_1$ lie in the interior of a common non-trivial block $B_{U,W}$, so there is an undirected path $P_{u,w}$ in $B_{U,W}$ from $u_1$ to $w_1$.

If $w_1$ and $v_1$ are incomparable, then any leaf in $V$ is a common descendant leaf of both, so Proposition~\ref{prop:incompatibility-facts-realizing-blocks}(a) yields an undirected path $Q$ from $w_1$ to $v_1$ inside a common non-trivial block. If $Q$ passes through $u_1$, then $u_1,v_1$ lie in that block; if $P_{u,w}$ passes through $v_1$, then $u_1,v_1\in B_{U,W}$. Otherwise $M:=P_{u,w}\cup Q$ is connected, contains $u_1,v_1$, and avoids $u_1v_1$, so the claim applies.

If $w_1$ and $v_1$ are comparable, then $v_1\leadsto w_1$ is impossible because it would imply $W\subseteq V$. Hence $w_1\leadsto v_1$; let $R$ be such a path. Along $R$, every cluster lies between $W$ and $V$, hence is either $W$ or $V$ by the cover property. Since $\sigma(u_1)=U\neq W,V$, the path $R$ avoids $u_1$. If $P_{u,w}$ passes through $v_1$, then $u_1,v_1\in B_{U,W}$; otherwise $M:=P_{u,w}\cup R$ is connected, contains $u_1,v_1$, and avoids $u_1v_1$, so the claim applies.

In all cases, $u_1$ and $v_1$ lie in a common non-trivial block of $N$.
\end{proof}

The next result upgrades edgewise realization to blockwise realization.  If
$K\subseteq G$ is a subgraph, a path with endpoints in $K$ and internal
vertices outside $K$ is called a $K$-ear.

\begin{proposition}
\label{prop:hasse-block-realized-in-one-network-block}
Let $N$ be a rooted network with $C_N=\mathcal C$, and let
$H=\mathcal H[\mathcal C]$.
\begin{enumerate}[label=(\alph*)]
\item A non-trivial block of $N$ has no external ear: if a path has two
distinct endpoints in the block and all internal vertices outside it, then the
path has no internal vertices.
\item If $B_H$ is a non-trivial block of $H$, then there exists a non-trivial
block $B_N$ of $N$ such that every $A\in V(B_H)$ has a representative
$v_A\in V(B_N)$ with $\sigma(v_A)=A$.
\end{enumerate}
\end{proposition}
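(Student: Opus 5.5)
For part (a), I will argue by contradiction using block maximality. Suppose $P$ is a path with distinct endpoints $x,y\in V(B_N)$ whose internal vertices all lie outside $B_N$, and suppose $P$ has at least one internal vertex. Since $B_N$ is $2$-connected and contains $x$ and $y$, it contains an $x$--$y$ path $Q$. Then $Q\cup P$ is a cycle, and $B_N\cup P$ is still $2$-connected: adding an ear to a $2$-connected graph preserves $2$-connectivity. This contradicts the maximality of $B_N$ among $2$-connected subgraphs. Hence $P$ has no internal vertices. This part is purely graph-theoretic and routine.

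For part (b), I will upgrade Proposition~\ref{prop:edge-to-block-realization} from edges to the whole block by a connectivity argument in the block-cut tree of $N$.

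\emph{Step 1.} For each edge $e=U\to V$ of $B_H$, fix a non-trivial block $B_N(e)$ of $N$ containing representatives $u_e,v_e$ with $\sigma(u_e)=U$ and $\sigma(v_e)=V$.

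\emph{Step 2.} I claim that if two edges $e,f$ of $B_H$ lie on a common cycle of $B_H$, then $B_N(e)=B_N(f)$. Since $B_H$ is $2$-connected, any two of its edges lie on a common cycle, so it suffices to handle consecutive edges of a cycle $C_1,\dots,C_t,C_1$ in $B_H$. For each edge of the cycle, the realizing blocks give pieces $u_e,v_e$ in $B_N(e)$. The difficulty is that the representatives of a cluster need not be unique, since several vertices of $N$ may have the same $\sigma$. So the pieces may not join up at shared vertices, and I must connect different representatives of the same cluster $C_i$ inside $N$.

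I will first try to avoid this by a sharper choice of representatives. Each incompatibility among clusters of the cycle forces the corresponding vertices into the interior of a common block by Proposition~\ref{prop:incompatibility-facts-realizing-blocks}(b). Moreover, every vertex $A\in B_H^0$ has an incompatible partner in $B_H$ by Proposition~\ref{prop:incompatibility-facts-realizing-blocks}(c). For a terminal vertex $A$ of $B_H$, Lemma~\ref{lem:terminal-two-parents} gives two incompatible parents, whose common descendants include $A$. The plan is then to show that all representatives chosen for clusters in $V(B_H)$ lie in one block. Take the union $M$ of the paths produced in the proofs of the edge and incompatibility statements; this union is connected. I will argue that each block $B_N(e)$ is linked to the next one by a path that has two distinct endpoints in $B_N(e)$ and internal vertices outside it, unless the two blocks coincide. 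Part (a) then forces the blocks to coincide.

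\emph{Step 3 (main obstacle).} The hard point is representative non-uniqueness. Two vertices $a,a'$ with $\sigma(a)=\sigma(a')=A$ could sit in different blocks. For this case I will first try the following. Choose $a$ inside $B_N(e)$, and choose a vertex $z$ with $\sigma(z)$ incompatible with $A$ (available for interior $A$). Proposition~\ref{prop:incompatibility-facts-realizing-blocks}(b), applied to the pairs $(a,z)$ and $(a',z)$, places both pairs in common blocks through $z$. Combined with the cover-edge path arguments from Proposition~\ref{prop:edge-to-block-realization}, this should yield an ear through $z$ returning to $B_N(e)$. Part (a) then shows the blocks agree, so a representative of $A$ lies in $B_N(e)$.

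Collecting these steps, a single block $B_N$ contains representatives of every $A\in V(B_H)$.
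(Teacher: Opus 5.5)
Your part (a) is fine and is exactly the paper's argument. Part (b), however, has a genuine gap. Everything rests on showing that the blocks $B_N(e)$ chosen for consecutive edges of a cycle in $B_H$ coincide, and the proposal never proves this. The ``path with two distinct endpoints in $B_N(e)$ and internal vertices outside it'' of Step~2 is asserted, not constructed, and Step~3 does not supply it, for two reasons.

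First, Step~3 only treats clusters that have an incompatible partner. It says nothing about $\max B_H$ (for instance $X$, which is compatible with everything), nor about terminal vertices of $B_H$ without a partner, and such vertices do occur on cycles of $B_H$.

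Second, applying Proposition~\ref{prop:incompatibility-facts-realizing-blocks}(b) to $(a,z)$ and $(a',z)$ only yields two non-trivial blocks through $z$. Nothing you cite forces these blocks to be equal or produces an ear back into $B_N(e)$. You could add the fact that a vertex is a non-root vertex of at most one block, which does put all representatives of such an $A$ into one block. You would still have to identify that block with $B_N(e)$, and this is not automatic when the representative you picked in $B_N(e)$ is the root of $B_N(e)$.

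The missing idea is a connectivity fact that needs no incompatibility at all. For any vertex $a$ and any cluster $A\neq\sigma(a)$, two representatives of $A$ are connected in $N-a$:
\begin{itemize}
\item if they are incomparable, they lie in a common non-trivial block by Proposition~\ref{prop:incompatibility-facts-realizing-blocks}(a), and that block stays connected after deleting $a$;
\item if they are comparable, they are joined by a directed path all of whose vertices have cluster exactly $A$, so the path avoids $a$.
\end{itemize}

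The paper feeds this into the block-cut tree. Take a simple cycle $A_0,\dots,A_{m-1}$ in $B_H$ with chosen blocks $D_i$, and let $S$ be the minimal subtree of $T(N)$ containing all the $D_i$. Suppose $S$ contained a cut-vertex node $a$. The blocks $D_i$ and $D_{i+1}$ both contain representatives of $A_{i+1}$, so they lie in the same component of $T(N)-a$ whenever $A_{i+1}\neq\sigma(a)$. Since the cycle is simple, at most one $A_j$ equals $\sigma(a)$, and you can go around the other side of the cycle at that index. Hence all $D_i$ lie in one component of $T(N)-a$, contradicting $a\in S$. So all $D_i$ coincide.

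With this cycle claim in place, your reduction via ``any two edges of a $2$-connected graph lie on a common cycle'' does finish (b). That ending is a little leaner than the paper's maximal-$K$ ear argument.
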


\begin{proof}
For (a), the union of a biconnected block and an external ear is again
biconnected \citep[Proposition~3.1.1]{Diestel2017}. Maximality of the block
therefore forces the whole path to lie in the block.

For (b), we use two claims. First, if $a$ is a cut vertex of a connected graph
$G$, then two blocks of $G$ lie in the same component of the block-cut tree
$T(G)-a$ exactly when their vertex sets, after deleting $a$, lie in the same
component of $G-a$. This follows directly by translating a path in $G-a$ into
a walk in the block-cut tree, and conversely.

Second, let $p=(A_0,A_1,\dots,A_{m-1},A_0)$ be a simple cycle in $B_H$. For
each edge $A_iA_{i+1}$, choose a non-trivial block $D_i$ of $N$ containing
representatives of both endpoint clusters; this is possible by
Proposition~\ref{prop:edge-to-block-realization}. We claim that all $D_i$ are
equal. Let $S$ be the minimal subtree of the block-cut tree of $N$ containing
the block-nodes $D_i$. If $S$ contained a cut-vertex-node $a$, put
$A:=\sigma(a)$. Consecutive blocks $D_i,D_{i+1}$ share representatives of
$A_{i+1}$. If $A_{i+1}\ne A$, these representatives have the same cluster not
equal to $A$ and remain connected in $N-a$: if they are incomparable, use
Proposition~\ref{prop:incompatibility-facts-realizing-blocks}(a) and the
resulting common block remains connected after deleting $a$; if they are
comparable, use the directed path between them, whose vertices all have the
same cluster and hence avoid $a$. The first claim then puts
$D_i$ and $D_{i+1}$ in the same component of $T(N)-a$. If $A_{i+1}=A$, the
same argument is applied around the opposite side of the cycle. Thus all
$D_i$ lie in one component of $T(N)-a$, contradicting that $a\in S$. Hence
$S$ is a single block-node, so $D_0=\cdots=D_{m-1}$.

Now choose a cycle $p_0$ in the non-trivial Hasse block $B_H$. The previous
paragraph gives a non-trivial block $B_N$ of $N$ representing every vertex of
$p_0$. Let $K\subseteq B_H$ be maximal among $2$-connected subgraphs containing
$p_0$ whose vertices all have representatives in this fixed block $B_N$. We
claim that $K=B_H$. If not, either an edge of $B_H$ lies outside $K$, or a
component of $B_H-V(K)$ attaches to $K$ in at least two places; in both cases
there is a $K$-ear $P$ in $B_H$.

Let $a,b$ be the endpoints of $P$, and let $Q$ be a simple $a$--$b$ path in
$K$. The cycle $P\cup Q$ is handled by the second claim: edges of $Q$ already
use the block $B_N$, while edges of $P$ have blocks supplied by
Proposition~\ref{prop:edge-to-block-realization}. Hence every vertex of $P$
also has a representative in $B_N$. Since $K\cup P$ is $2$-connected
\citep[Proposition~3.1.1]{Diestel2017}, this contradicts maximality of $K$.
Therefore $K=B_H$, proving (b).
\end{proof}

\subsection{Pulling network generators back to the Hasse diagram}

The last lower-bound package converts generators in the realizing network back
to generators in the corresponding Hasse block.

\begin{proposition}
\label{prop:network-generators-localize}
The following hold.
\begin{enumerate}[label=(\alph*)]
\item Let $G$ be a connected graph, let $B$ be a block of $G$, and let
$x\in V(G)\setminus V(B)$. If $c$ is the first cut-vertex-node after $B$ on
the path in the block-cut tree from $B$ to the node containing $x$, then
$c\in V(B)$ and every path in $G$ from $V(B)$ to $x$ contains $c$.
\item Let $B_H$ be a non-trivial block of $H=\mathcal H[\mathcal C]$, and let
$B_N$ be a non-trivial block of a realizing network $N$ such that every cluster
in $V(B_H)$ is realized by a vertex of $B_N$. Then
\[
\mu(B_H)\le \mu(B_N)\le h(B_N).
\]
\end{enumerate}
\end{proposition}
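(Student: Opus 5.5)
Part (a) is a standard block-cut-tree fact, and part (b) splits into the two inequalities $\mu(B_N)\le h(B_N)$ and $\mu(B_H)\le\mu(B_N)$. The second inequality is where the work lies.

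\emph{Part (a).} Let $c$ be the cut-vertex-node adjacent to $B$ on the block-cut-tree path from $B$ towards the node containing $x$. By definition of $T(G)$, $c\in V(B)$. If some path from $V(B)$ to $x$ avoided $c$, then the first claim in the proof of Proposition~\ref{prop:hasse-block-realized-in-one-network-block}(b) applies: translating this path into a walk in $T(G)-c$ puts the node containing $x$ in the same component of $T(G)-c$ as $B$. This contradicts the fact that $c$ separates them in the tree. The case where the path starts at $c$ itself is trivial.

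\emph{Inequality $\mu(B_N)\le h(B_N)$.} Take $K:=\{\sigma(h)\mid h\in H^\times(B_N)\}\subseteq\Cl(B_N)$, so $|K|\le h(B_N)$. For $I=A\cap A'\in\mathcal J(B_N)$, Proposition~\ref{prop:incompatibility-facts-realizing-blocks}(d) gives $I=\bigcup_{h\in H_{A,A'}}\sigma(h)$ with each $\sigma(h)\subseteq I$. Hence $I$ equals the union of the members of $K$ contained in $I$, and $K$ is a generating set. If $\mathcal J(B_N)=\varnothing$, the inequality is trivial.

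\emph{Inequality $\mu(B_H)\le\mu(B_N)$.} Since every cluster of $V(B_H)$ is realized in $B_N$, we have $V(B_H)\subseteq\Cl(B_N)$, hence $\mathcal J(B_H)\subseteq\mathcal J(B_N)$. Fix a minimum generating set $K_N$ of $B_N$. I will build a map $\phi:K_N\to V(B_H)\cup\{\bot\}$ and set $K_H:=\phi(K_N)\setminus\{\bot\}$, so that $|K_H|\le|K_N|$. Define $\phi(E)=\bot$ unless $E\subseteq I$ for some $I\in\mathcal J(B_H)$; in that case let $\phi(E)$ be a minimal element of
\[
\{F\in V(B_H)\mid E\subseteq F\}.
\]
This family is nonempty because it contains the two incompatible clusters whose intersection is $I$. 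To check that $K_H$ generates, take $I\in\mathcal J(B_H)$ and $x\in I$. There is $E\in K_N$ with $x\in E\subseteq I$, and we need $\phi(E)\subseteq I$. So the key claim is: \emph{if $E\in\Cl(B_N)$ is contained in an incompatibility intersection $I=A\cap A'$ of $B_H$, then $E$ is contained in some $F\in V(B_H)$ with $F\subseteq I$, and such $F$ can be chosen uniformly in $I$.} The uniformity comes for free from minimality provided the clusters of $V(B_H)$ containing $E$ form a chain. That chain property is what I would prove first.

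\emph{Proving the chain property (the main obstacle).} Clusters of $\mathcal C$ containing $E$ that also contain the root-side structure should be nested via the block structure of $H$. I would use part (a) applied to $H$: the cluster $E$ appears as a vertex of $H$, since $C_N=\mathcal C$. If $E\notin V(B_H)$, let $c\in V(B_H)$ be the cut vertex of $H$ through which every path from $V(B_H)$ to $E$ passes. Every $F\in V(B_H)$ with $E\subseteq F$ has a downward Hasse path to $E$ that must pass through $c$, so $E\subseteq c\subseteq F$. Thus $\phi(E)=c$ is canonical and lies below every member of $V(B_H)$ containing $E$, in particular below both $A$ and $A'$, giving $c\subseteq I$. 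If instead $E\in V(B_H)$, set $\phi(E)=E$ directly. So the chain issue reduces to defining $\phi(E)$ as $E$ or as this entry cut vertex $c$. The delicate point is verifying that $c\subseteq A\cap A'$, which uses that $A$ and $A'$ each reach $E$ by Hasse paths. I expect the boundary cases, namely when $c=\max B_H$, or when $E$ sits in a different branch so that some $F\supseteq E$ is reached without a Hasse path through $c$, to need the most care. These should be excluded because any $F\supseteq E$ in $\mathcal C$ has a Hasse descending path to $E$.
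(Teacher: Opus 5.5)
Your proposal is correct and is essentially the paper's proof: (a) by translating a $c$-avoiding path into a walk in $T(G)-c$; $\mu(B_N)\le h(B_N)$ via $\{\sigma(h)\mid h\in H^\times(B_N)\}$ and Proposition~\ref{prop:incompatibility-facts-realizing-blocks}(d); and $\mu(B_H)\le\mu(B_N)$ by sending each relevant $E\in K_N$ either to $E$ itself or to the entry cut vertex of $B_H$ towards $E$, which is exactly the paper's map $\tau$. Fix one point of framing: the family $\{F\in V(B_H)\mid E\subseteq F\}$ need not be a chain (take $E=\{2\}$ in Figure~\ref{fig:example-mu-one}); what your cut-vertex argument actually shows, and all you use, is that this family has a least element, and the case $c=\max B_H$ you worried about cannot occur because $E\subseteq c\subseteq A\cap A'$ would force $A=A'=\max B_H$.
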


\begin{proof}
For (a), the first node after $B$ on the block-cut tree path is a cut vertex
contained in $B$. If a path in $G$ from $V(B)$ to $x$ avoided this cut vertex,
then reading off the blocks and cut vertices traversed by the path would give a
walk in the block-cut tree avoiding $c$, from $B$ to the node containing $x$.
This contradicts the uniqueness of the block-cut tree path.

For (b), first note that $\mu(B_N)\le h(B_N)$. Indeed, let
$K_N^0:=\{\sigma(h)\mid h\in H^\times(B_N)\}\subseteq \Cl(B_N)$.
For every $I=A\cap A'\in\mathcal J(B_N)$,
Proposition~\ref{prop:incompatibility-facts-realizing-blocks}(d) writes
$I$ as a union of clusters $\sigma(h)$ with $h\in H^\times(B_N)$ and
$\sigma(h)\subseteq I$. Hence $K_N^0$ is a generating set of $B_N$, and
$\mu(B_N)\le |K_N^0|\le h(B_N)$.

Let $K_N\subseteq\Cl(B_N)$ be a minimum generating set of $B_N$. Define
\[
S_{B_H}:=\{\,E\in K_N\mid \exists I\in\mathcal J(B_H)\text{ with }E\subseteq I\,\}.
\]
For each $E\in S_{B_H}$ we choose a cluster $\tau(E)\in V(B_H)$ as follows.
If $E\in V(B_H)$, set $\tau(E):=E$. If $E\notin V(B_H)$, choose
$I_0=A_0\cap A_0'\in\mathcal J(B_H)$ with $E\subseteq I_0\subseteq A_0$. Since
$E$ is a cluster of $\mathcal C$, it is a vertex of $H$, and there is a
directed path in $H$ from $A_0$ to $E$. Let $\tau(E)$ be the first cut vertex
after the block-node $B_H$ on the block-cut tree path from $B_H$ to the node
containing $E$.

By part (a), $\tau(E)\in V(B_H)$ and every path in $H$ from $V(B_H)$ to $E$
passes through $\tau(E)$. Therefore, whenever $E\subseteq I=A\cap A'$ with
$A,A'\in V(B_H)$ incompatible, the two downward paths from $A$ and $A'$ to
$E$ both pass through $\tau(E)$, and hence $E\subseteq \tau(E)\subseteq I$.

Set $K_{B_H}:=\{\tau(E)\mid E\in S_{B_H}\}$. We show that $K_{B_H}$ generates
$B_H$. Let $I=A\cap A'\in\mathcal J(B_H)$. Since $A,A'$ are realized in
$B_N$, we also have $I\in\mathcal J(B_N)$. Because $K_N$ generates $I$, each
element of $I$ lies in some $E\in K_N$ with $E\subseteq I$. Such an $E$ lies in
$S_{B_H}$, and the previous paragraph gives $E\subseteq\tau(E)\subseteq I$.
Thus \(I=\bigcup\{\,D\in K_{B_H}\mid D\subseteq I\,\}\).
Consequently $K_{B_H}$ is a generating set and
$\mu(B_H)\le |K_{B_H}|\le |K_N|=\mu(B_N)\le h(B_N)$.
\end{proof}

\section{Upper Bound I: The Splitting Construction}
\label{app:splitting}

This section sets up the splitting machinery used in the sufficiency direction.
The construction starts from the Hasse diagram and repeatedly splits selected
hybrid vertices while preserving the hardwired cluster system.  The main local
tasks are to define which hybrid vertices are still bad, to prove that a split
does not change the represented clusters, and to control where new bad hybrid
vertices can appear after a split.

\subsection{Setting up the splitting construction}

\paragraph{Running convention.}
Every vertex in a current network carries an \emph{original Hasse-label}
$\lambda(v)\in V(H)$.  When a vertex $u$ is split, all new copies inherit the
same label: $\lambda(u_i)=\lambda(u)$.  A vertex with original label $x$ is
called a \emph{current copy} of $x$.

Whenever the construction starts from $N_0=\mathcal H[\mathcal C]$, we identify each inclusion-minimal vertex $\{x\}$ with the taxon $x$; equivalently, we relabel each leaf $\{x\}$ by $x$ while keeping its original Hasse-label $\{x\}$. This does not change descendant clusters, so the resulting rooted network still has cluster system $\mathcal C$, and we continue to denote it by $\mathcal H[\mathcal C]$.

Let $N$ be a rooted network, and let $u$ be a non-root, non-leaf vertex of $N$
with parents $p_1,\dots,p_r$ and children $w_1,\dots,w_t$. The \emph{vertex
split} of $u$ is the directed graph obtained from $N$ by deleting $u$, adding
new vertices $u_1,\dots,u_r$, and inserting the edges $p_i\to u_i$ and
$u_i\to w_j$ for all $i,j$. For brevity we call this operation a \emph{split}.
Starting from the initial network $N_0$ above, any digraph obtained after
finitely many such splits is called a \emph{current network};
Proposition~\ref{prop:splitting-invariants}(a) shows that every current network
is again a rooted network with the same cluster system. For an original
non-trivial block $B$ of $H=\mathcal H[\mathcal C]$ and a vertex $x\in V(B)$,
let $d_B(x)$ be the length of a longest directed path in $B$ from $\max B$ to
$x$; this is the \emph{depth} of $x$ in the original block $B$.

\begin{definition}
Let $x\in V(\mathcal H[\mathcal C])$. If there exists an original non-trivial block $B$ of $\mathcal H[\mathcal C]$ such that $x\in V(B)\setminus\{\max B\}$, then $B$ is the \emph{owner block} of $x$, denoted by $\operatorname{own}(x)$.
\end{definition}

\begin{definition}
Fix, for every non-trivial block $B$ of $\mathcal H[\mathcal C]$, a generating set
$K_B\subseteq \Cl(B)$. Let $N$ be a current network, and let $\lambda(v)$ denote
the original Hasse-label of a vertex $v\in V(N)$. For a fixed original
non-trivial block $B$, define
\[
H_B^{\mathrm{bad}}(N):=\{\,v\in V(N)\mid \lambda(v)\in \Cl(B)\setminus (K_B\cup\{\max B\}),\ \indeg(v)\ge 2\,\}.
\]
Vertices in $H_B^{\mathrm{bad}}(N)$ are called bad hybrid vertices relative to $B$.
\end{definition}

\begin{proposition}\label{prop:admissible-bad-hybrid-splits}
The following bookkeeping facts hold throughout the construction.
\begin{enumerate}[label=(\alph*)]
\item If an original vertex $x$ of $\mathcal H[\mathcal C]$ is a non-root vertex of some original non-trivial block, then $\operatorname{own}(x)$ is well defined and unique.
\item If $B$ is a non-trivial block of $\mathcal H[\mathcal C]$ and $\{x\}\in V(B)$, then every generating set $K_B$ of $B$ contains $\{x\}$.
\item In every current network, a vertex $v$ is a leaf if and only if $\lambda(v)$ is a singleton cluster.
\item If $u\in H_B^{\mathrm{bad}}(N)$, then $u$ is a non-root, non-leaf vertex and can be split.
\end{enumerate}
\end{proposition}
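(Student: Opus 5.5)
The four items are separate bookkeeping facts. I would prove (a) and (b) directly in the original Hasse diagram $H=\mathcal H[\mathcal C]$, prove (c) by induction on the number of splits, and then obtain (d) by combining (b) and (c). None of the steps needs the deeper lower-bound machinery; the only real content is in (b).

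For (a), suppose $x$ is a non-root vertex of two distinct original non-trivial blocks $B\neq B'$. Then $x$ is a cut vertex of $H$. Consider the rooted block-cut tree of $H$. As in the proof of the first lemma, the root $\max D$ of any block-node $D$ other than the root block is the cut vertex that is the parent of $D$. A cut-vertex-node $x$ has exactly one parent block in the rooted tree, and in every other block containing $x$, the vertex $x$ is the parent cut vertex and hence the root of that block. If the root block contains $x$ as a non-root vertex, the same conclusion holds. So $x$ is a non-root vertex of at most one block, and $\operatorname{own}(x)$ is well defined.

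For (b), let $\{x\}\in V(B)$. Since $\{x\}$ is a sink of $H$ and $\max B$ has out-edges inside $B$, the vertex $\{x\}$ is terminal in $B$. Lemma~\ref{lem:terminal-two-parents} then gives two distinct parents $U,W\in V(B)$ with $U\to\{x\}$ and $W\to\{x\}$. Neither of $U,W$ contains the other, since otherwise one of the two edges would not be a cover edge. Hence $U$ and $W$ are incompatible, because they meet in $x$, and so $I:=U\cap W\in\mathcal J(B)$ with $x\in I$. By Proposition~\ref{prop:set-cover-formulation}, any generating set $K_B$ contains some $E$ with $x\in E\subseteq I$.

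This is the step needing care: I must show $E=\{x\}$. If $\{x\}\subsetneq E$, then $E\subseteq U$. Moreover $E\ne U$, since $U\not\subseteq W$ while $E\subseteq W$. So $E$ lies strictly between $\{x\}$ and $U$, which contradicts the cover property of the edge $U\to\{x\}$. Therefore $E=\{x\}\in K_B$.

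For (c), I would induct on the number of splits. In $N_0$ the leaves are exactly the sinks of $H$, namely the singleton clusters, as recorded in Section~2. A split applies only to a non-leaf vertex $u$ with $t\ge 1$ children. Each copy $u_i$ receives all $t$ children, so it is a non-leaf with the same label. Every other vertex keeps its out-degree, since a parent $p_i$ trades its edge to $u$ for an edge to $u_i$. Thus the leaf set and the labels of the leaves are unchanged.

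For (d), let $u\in H_B^{\mathrm{bad}}(N)$. Then $\indeg(u)\ge 2$, so $u$ is not the root. Also $\lambda(u)\in\Cl(B)\setminus K_B$. If $\lambda(u)$ were a singleton, part (b) would force $\lambda(u)\in K_B$, a contradiction. Hence $\lambda(u)$ is not a singleton, and by (c) the vertex $u$ is not a leaf. So $u$ satisfies the hypotheses of the vertex split.
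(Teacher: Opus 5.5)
Your proposal is correct and follows the paper's proof essentially step for step. In (a) you use the rooted block-cut tree, where a cut vertex is the root of every block below it. In (b) you take the two cover-edge parents $U,W$ of the terminal singleton $\{x\}$ and use the cover property to force $E=\{x\}$. In (d) you combine (b) with (c).

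The only deviation is in (c). There you use the split's own precondition that $u$ is a non-leaf, so its label is non-singleton by the induction hypothesis, instead of invoking (b) for the bad vertex being split. This is slightly cleaner, since it covers every split and hence every current network as defined.
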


\begin{proof}
For (a), if $x$ is not a cut vertex of $\mathcal H[\mathcal C]$, then it lies in a unique block. If $x$ is a cut vertex and $B$ is a non-trivial block with $x\in V(B)\setminus\{\max B\}$, then $B$ cannot be a child block of the cut-vertex-node $x$ in the rooted block-cut tree; otherwise every path from the root to a vertex of $B$ would enter $B$ through $x$, forcing $x=\max B$. Thus $B$ is the unique parent block of $x$.

For (b), the singleton $\{x\}$ is terminal in $B$. By Lemma~\ref{lem:terminal-two-parents}, it has distinct parents $U,W\in V(B)$ with $U\to\{x\}$ and $W\to\{x\}$. They are incomparable, hence incompatible. Put $I:=U\cap W$. Since $K_B$ generates $I$, there is $E\in K_B$ with $x\in E\subseteq I$. Then $\{x\}\subseteq E\subseteq U$. The edge $U\to\{x\}$ is a cover edge, so $E=U$ or $E=\{x\}$; but $E\subseteq I=U\cap W\subsetneq U$. Hence $E=\{x\}$.

For (c), use induction on the number of splits. Initially the leaves are exactly the singleton clusters. Suppose $N'$ is obtained from $N$ by splitting a bad vertex $u\in H_B^{\mathrm{bad}}(N)$. By (b), $\lambda(u)$ is not a singleton; by induction, $u$ is not a leaf. All old vertices other than $u$ keep their leaf status, and the new copies of $u$ have the same non-empty child set as $u$ and the same non-singleton label. Thus the equivalence is preserved.

For (d), a bad vertex has indegree at least two, so it is not the root. Also $\lambda(u)\in\Cl(B)\setminus(K_B\cup\{\max B\})$, and (b) shows that $\lambda(u)$ is not a singleton. By (c), $u$ is not a leaf.
\end{proof}

Every vertex $v\in H_B^{\mathrm{bad}}(N)$ satisfies $\lambda(v)\in V(B)\setminus\{\max B\}$, so Proposition~\ref{prop:admissible-bad-hybrid-splits}(a) gives $\operatorname{own}(\lambda(v))=B$. In particular, a current vertex belongs to at most one set $H_B^{\mathrm{bad}}(N)$, and for such a vertex the depth $d_B(\lambda(v))$ is well defined.

The next result records the structural invariants of the split operation. It
will be used repeatedly without reproving the same edge-projection
bookkeeping.

\begin{proposition}\label{prop:splitting-invariants}
Let $N'$ be obtained from a current network $N$ by splitting a non-root,
non-leaf vertex $u$.
\begin{enumerate}[label=(\alph*)]
\item $N'$ is again a rooted network, with the same root, leaf set, and cluster system as $N$.
\item If $a\to b$ is an edge of any current network, then $\lambda(a)\to\lambda(b)$ is an edge of the original Hasse diagram $\mathcal H[\mathcal C]$.
\item Let $v$ be any original vertex of $\mathcal H[\mathcal C]$, and let $\operatorname{Par}_{\mathcal H}(v)$ denote the set of original parents of $v$. In every current network, every parent of every current copy of $v$ has original label in $\operatorname{Par}_{\mathcal H}(v)$.
\end{enumerate}
\end{proposition}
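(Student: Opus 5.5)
We prove the three parts for a single split and then, for (b) and (c), argue by induction on the number of splits from $N_0=\mathcal H[\mathcal C]$. For the base case of (b), every edge of $N_0$ is an edge of $\mathcal H[\mathcal C]$ and $\lambda$ is the identity on $N_0$, so there is nothing to check.

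For (a), the vertices of $N'$ are those of $N$ other than $u$, together with the copies $u_1,\dots,u_r$. A directed cycle in $N'$ would contract, by identifying every $u_i$ with $u$, to a closed directed walk in $N$. That walk contains a directed cycle, contradicting acyclicity of $N$, so $N'$ is acyclic. Each $u_i$ has indegree exactly one. Every other vertex keeps its indegree, because an edge $p_i\to u$ is replaced by $p_i\to u_i$ and each edge $u\to w_j$ by the edges $u_i\to w_j$. Since $u\ne\rho_N$, the root is still the unique vertex of indegree $0$. No vertex loses all its children: $u$ is not a leaf, so each $u_i$ has the nonempty child set $\{w_1,\dots,w_t\}$, and each parent $p_i$ now has $u_i$ in place of $u$. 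Hence the leaves are unchanged and the leaf set is still $X$.

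For the cluster system, define $\pi\colon V(N')\to V(N)$ by $\pi(u_i)=u$ and $\pi(x)=x$ otherwise. Then $\pi$ maps edges to edges and is surjective on edges of $N$. We check that for every vertex $x$ of $N'$, the leaves reachable from $x$ in $N'$ are exactly those reachable from $\pi(x)$ in $N$. A directed path in $N'$ projects under $\pi$ to a directed path in $N$, which gives one inclusion. Conversely, take a directed path in $N$ starting at $\pi(x)$. If it passes through $u$, it enters $u$ either from some parent $p_i$ or, when $x=u_i$, it starts at $u$. In the first case we route through $u_i$; in the second we start at $u_i$. Either way we then continue to the same child $w_j$. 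Such a path visits $u$ at most once, since $N$ is acyclic, so this lifting is consistent. Therefore $\sigma_{N'}(x)=\sigma_N(\pi(x))$. Every vertex of $N$ is $\pi$ of some vertex of $N'$, because $u$ has at least one copy. Hence $C_{N'}=C_N$.

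For (b), assume the claim for $N$. Every edge of $N'$ has the form $a\to b$ where $\pi(a)\to\pi(b)$ is an edge of $N$. Labels are inherited, so $\lambda(a)=\lambda(\pi(a))$, and the induction hypothesis gives the claim for $N'$. Part (c) follows from (b): if $p\to v'$ is an edge and $\lambda(v')=v$, then $\lambda(p)\to v$ is a Hasse edge, so $\lambda(p)\in\operatorname{Par}_{\mathcal H}(v)$.

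The step that needs the most care is the path-lifting argument in (a) when a path passes through $u$. Choosing the copy $u_i$ determined by the incoming parent $p_i$ resolves it, and this works because each $u_i$ keeps all of $u$'s children.
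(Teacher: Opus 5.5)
Your proof is correct in substance and follows the paper's route: the projection $u_i\mapsto u$ (the paper's ``segment replacement'') gives acyclicity and cluster preservation, (b) follows by induction on splits because every new edge projects to an old one, and (c) follows from (b). One statement is false as written, namely that every vertex other than the copies keeps its indegree: each child $w_j$ of $u$ receives the $r$ edges $u_1\to w_j,\dots,u_r\to w_j$ in place of $u\to w_j$, so its indegree grows by $r-1$ (this is exactly how splits create new hybrids, cf.\ the proof of Lemma~\ref{lem:new-bad-locality}); since no surviving vertex loses indegree and each $u_i$ has indegree $1$, your conclusion that $\rho_N$ remains the unique vertex of indegree $0$ still stands.
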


\begin{proof}
For (a), write the parents and children of $u$ as $p_1,\dots,p_r$ and
$w_1,\dots,w_t$. Any directed cycle in $N'$ would contain a new vertex $u_i$;
replacing each segment $p_i\to u_i\to w_j$ by $p_i\to u\to w_j$ gives a
directed cycle in $N$, impossible. The old root is not deleted, no new
indegree-$0$ vertex is created, and every old or new vertex remains reachable
from the root by the same segment replacement. The leaf set is unchanged
because $u$ is not a leaf, each new $u_i$ has children, and every parent $p_i$
keeps an outgoing edge. Finally, descendants below each $u_i$ are exactly those
below $u$, and paths through $u$ in $N$ correspond to paths through the copies
of $u$ in $N'$. Hence all vertex clusters are preserved and $C_{N'}=C_N$.

For (b), use induction on the number of splits. The claim is immediate for
$N_0=\mathcal H[\mathcal C]$. If $a\to b$ is a new edge after splitting $u$,
then it is either $p_i\to u_i$ or $u_i\to w_j$, and it projects to the old
current edge $p_i\to u$ or $u\to w_j$. The induction hypothesis gives the
corresponding original Hasse edge.

Part (c) follows immediately from (b): if $p$ is a parent of a current copy
$x$ of $v$, then $\lambda(p)\to\lambda(x)=v$ is an original Hasse edge.
\end{proof}

\subsection{Local behavior of bad hybrid vertices under splitting}

Once a bad hybrid vertex is split, the construction needs a monotonicity
principle: new bad hybrid vertices may appear, but only lower in the same
original block or in descendant blocks of the original block-cut tree.

\begin{definition}
Let $N$ be a current network and let $B$ be an original non-trivial block of
$\mathcal H[\mathcal C]$. Define
\[
N\langle B\rangle:=N\bigl[\{\,v\in V(N)\mid \lambda(v)\in V(B)\,\}\bigr].
\]
\end{definition}

\begin{lemma}\label{lem:new-bad-locality}
Let $u\in H_B^{\mathrm{bad}}(N)$, where $B$ is an original non-trivial block,
and let $N'$ be obtained from $N$ by splitting $u$.
\begin{enumerate}[label=(\alph*)]
\item Every new bad hybrid vertex of $N'$ relative to $B$ is a former child of $u$ and has label of larger $B$-depth.
\item Every new bad hybrid vertex of $N'$ has owner block equal to $B$ or to a descendant non-trivial block of $B$ in the rooted block-cut tree.
\item If $v$ is a current vertex with $\lambda(v)\in V(B)$ and $\lambda(v)$ is not an original cut vertex, then every parent and every child of $v$ has original label in $V(B)$.
\end{enumerate}
\end{lemma}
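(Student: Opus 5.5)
The plan is to compare indegrees before and after the split, and to translate statements about current vertices into statements about the original Hasse diagram via the edge projection of Proposition~\ref{prop:splitting-invariants}(b). First I would record what the split does to indegrees. Write $p_1,\dots,p_r$ for the parents of $u$ and $w_1,\dots,w_t$ for its children. Vertices other than the $w_j$ and the new copies $u_i$ keep their in-neighbourhoods up to renaming, so their indegree is unchanged. Each copy $u_i$ has indegree $1$. Each child $w_j$ loses the single edge $u\to w_j$ and gains $r\ge 2$ edges $u_i\to w_j$, so its indegree increases by $r-1\ge1$. Hence a vertex of $N'$ can become newly bad only if it is a former child $w_j$ of $u$ that was not already bad in $N$. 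Its label is unchanged, and badness depends only on the label and the indegree.

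For (a), take a new bad vertex $w=w_j$ relative to $B$. By Proposition~\ref{prop:splitting-invariants}(b), $\lambda(u)\to\lambda(w)$ is an edge of $\mathcal H[\mathcal C]$. Badness relative to $B$ gives $\lambda(u),\lambda(w)\in V(B)$. An edge of the Hasse diagram whose endpoints both lie in the block $B$ is an edge of $B$, because distinct blocks share at most one vertex and hence no edge. It follows that $d_B(\lambda(w))\ge d_B(\lambda(u))+1$, since any longest path from $\max B$ to $\lambda(u)$ extends by that edge.

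For (b), a new bad vertex $w$ is again a former child, and its owner block $B'=\operatorname{own}(\lambda(w))$ exists by the remark after Proposition~\ref{prop:admissible-bad-hybrid-splits}. The edge $e=\lambda(u)\to\lambda(w)$ lies in a unique block $D$ of $H$. There are two cases.
\begin{itemize}
\item If $\lambda(u)$ is not a cut vertex, its only block is $B$, so $D=B$. Then $\lambda(w)\in V(B)\setminus\{\max B\}$, and by uniqueness in Proposition~\ref{prop:admissible-bad-hybrid-splits}(a) we get $B'=B$.
\item If $\lambda(u)$ is a cut vertex, then $D$ is $B$ or a child block of the cut-vertex-node $\lambda(u)$. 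We have $\lambda(u)\ne\max B$ because $u$ is bad, so $\lambda(u)$ is non-root in $B$. Hence every other block containing $\lambda(u)$ hangs below it in the rooted block-cut tree.
\end{itemize}
In the second case, $\lambda(w)$ is then a vertex of a block $D$ descending from $B$, with $\lambda(w)\ne\max D=\lambda(u)$. If $D$ is non-trivial, then $B'=D$. If $D$ is a single edge, then $\lambda(w)$ is non-root in $B'$, so $B'$ contains $\lambda(w)$ as a non-root vertex. The parent block of $\lambda(w)$ in the block-cut tree is $D$, so $B'$ cannot be $D$ and must be a descendant of $D$. This contradicts $\lambda(w)$ being non-root in $B'$: in every descendant block the vertex $\lambda(w)$ would be the root. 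So this subcase cannot occur when $w$ is bad, and in all cases $B'$ is $B$ or a descendant of it.

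For (c), apply Proposition~\ref{prop:splitting-invariants}(b) to an edge $p\to v$ or $v\to c$. It gives a Hasse edge between $\lambda(v)$ and $\lambda(p)$, respectively $\lambda(c)$. Since $\lambda(v)$ is not a cut vertex, every edge at $\lambda(v)$ lies in the unique block $B$, so the other endpoint belongs to $V(B)$.

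I expect the main obstacle to be the cut-vertex case in (b): showing that the block containing the edge lies below $B$, and dealing with trivial bridge blocks. To resolve it I would use the fact that $\lambda(u)$ is non-root in $B$, together with the rooted block-cut tree convention, exactly as in the proof of Proposition~\ref{prop:admissible-bad-hybrid-splits}(a).
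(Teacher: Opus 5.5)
Your proof is correct and follows essentially the paper's route. You use the indegree bookkeeping to show that new bad vertices are former children, the projected Hasse edge $\lambda(u)\to\lambda(w)$ inside $B$ for the depth increase, and owner-block uniqueness together with the rooted block-cut tree for (b). The only differences are improvements: for (c) you cite Proposition~\ref{prop:splitting-invariants}(b) directly rather than re-running an induction on splits, and in (b) you explicitly rule out the bridge subcase that the paper's one-line cut-vertex argument glosses over.
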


\begin{proof}
The split changes indegrees only at $u$, the new copies of $u$, and the former
children of $u$; each new copy has indegree $1$. Hence any new bad hybrid
vertex is a former child of $u$.

If such a vertex $v$ is bad relative to $B$, then
$\lambda(u),\lambda(v)\in V(B)\setminus\{\max B\}$. Since $u\to v$ was an edge
of $N$, Proposition~\ref{prop:splitting-invariants}(b) gives the original
Hasse edge $\lambda(u)\to\lambda(v)$. Both endpoints lie in $V(B)$, so this
edge belongs to $B$, and any directed path in $B$ from $\max B$ to
$\lambda(u)$ extends to one ending at $\lambda(v)$. Thus
$d_B(\lambda(v))>d_B(\lambda(u))$, proving (a).

For (b), let $v$ be any new bad hybrid vertex. The same edge
$\lambda(u)\to\lambda(v)$ exists in the original Hasse diagram. Since
$u\in H_B^{\mathrm{bad}}(N)$, Proposition~\ref{prop:admissible-bad-hybrid-splits}(a)
gives $\operatorname{own}(\lambda(u))=B$. If $\lambda(v)\in V(B)$, then
$\lambda(v)\subsetneq\lambda(u)\subsetneq\max B$, so
$\operatorname{own}(\lambda(v))=B$. Otherwise the edge leaves $B$ through the
cut vertex $\lambda(u)$. The block containing that edge is a child block below
$\lambda(u)$, hence a descendant of $B$, and the owner block of $\lambda(v)$
lies in the same descendant subtree.

For (c), induct on the number of splits. Initially, a non-cut vertex of the
Hasse diagram lies in a unique block, so all its incident Hasse edges lie in
that block. During a split, only edges incident with the split vertex change.
If the current copy under consideration is unaffected, there is nothing to
prove; if it is a new copy of the split vertex or an old neighbor of that
vertex, the claim follows from the induction hypothesis applied to the
corresponding old incident edges.
\end{proof}

\section{Upper Bound II: Cut Vertices and Block Localization}
\label{app:localization}

This section proves that the splitting construction does not destroy the
original cut-vertex decomposition of the Hasse diagram.  These results keep
splits localized to one original block and supply the block-control invariant
used in the prescribed-generator realization theorem.  The argument first
rules out splitting original cut vertices, then shows that generator labels are
never split, and finally proves that every affected current block remains
inside the support of a single original Hasse block.

\subsection{Cut vertices are never split}

The first two propositions dispose of hybrid cut vertices.  A hybrid cut
vertex is forced into the generating set of its parent block, and therefore it
can never be selected as a bad hybrid vertex.

\begin{proposition}\label{prop:shared-in-K}
Let $Z$ be a hybrid cut vertex of the original Hasse diagram
$\mathcal H[\mathcal C]$, and put $B_H^\uparrow:=\operatorname{own}(Z)$.
Then $Z\in K_{B_H^\uparrow}$.
\end{proposition}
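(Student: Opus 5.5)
We will show directly that the owner block forces $Z$ into every generating set, in the same spirit as the proof of Proposition~\ref{prop:admissible-bad-hybrid-splits}(b): we produce an incompatible pair $U,W$ in $B:=B_H^\uparrow$ whose intersection can only be generated near a leaf below $Z$ by $Z$ itself.

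\emph{Step 1: two parents of $Z$ inside $B$.} Every edge of $H$ lies in exactly one block. Consider the blocks of $H$ containing $Z$. These are $B$, where $Z$ is a non-root vertex, and the child blocks of the cut-vertex-node $Z$ in the rooted block-cut tree. In each child block, $Z$ is the root, so that block contains no edge entering $Z$. Hence every incoming edge of $Z$ lies in $B$. Since $Z$ is hybrid, it has two distinct parents $U,W\in V(B)$ with cover edges $U\to Z$ and $W\to Z$.

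These parents are incomparable. If, say, $U\subsetneq W$, then $Z\subsetneq U\subsetneq W$ and $W\to Z$ would not be a cover edge. Since $Z\subseteq U\cap W$, the clusters $U$ and $W$ are incompatible. Therefore $I:=U\cap W\in\mathcal J(B)$, with $Z\subseteq I$.

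\emph{Step 2: localize a leaf below $Z$.} Because $Z$ is a cut vertex, it has at least one child block, so $Z$ is not a singleton. Fix $x\in Z$. We first claim that $\{x\}\notin V(B)$. A directed path in $H$ from $Z$ to $\{x\}$ starts with an edge in a child block of $Z$. If it ended in $V(B)$, then by Proposition~\ref{prop:network-generators-localize}(a) the segment after $Z$ would have to pass through $Z$ again to re-enter $B$. That is impossible for a directed path in an acyclic graph. Next, for the same reason, $\{x\}$ lies in a node of the block-cut subtree hanging below $Z$. Hence the first cut vertex after $B$ on the block-cut tree path from $B$ to the node containing $\{x\}$ is $Z$.

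\emph{Step 3: conclude.} Since $K_B$ generates $I$, there is $E\in K_B\subseteq V(B)$ with $x\in E\subseteq I$. Then $H$ contains a directed path from $E$ to $\{x\}$. This path starts in $V(B)$ and ends outside $B$, so by Proposition~\ref{prop:network-generators-localize}(a) and Step 2 it passes through $Z$. Therefore $Z\subseteq E\subseteq U\cap W$.

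If $E\neq Z$, then $Z\subsetneq E\subseteq U$. The cover property of $U\to Z$ forces $E=U$, and then $U\subseteq W$, contradicting Step 1. Hence $E=Z$, and so $Z\in K_{B_H^\uparrow}$.

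The main obstacle is Step 2: justifying rigorously that every downward path from a vertex of $B$ to a leaf in $Z$ must pass through $Z$. The plan is to handle it purely through the block-cut tree, using the fact that $Z$ is the cut-vertex node separating $B$ from all of its child blocks, together with acyclicity.
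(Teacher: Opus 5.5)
\paragraph{Gap in Step 2: the choice of $x$.} Your Step~2 fixes an \emph{arbitrary} $x\in Z$ and then asserts that every directed path from $Z$ to $\{x\}$ begins with an edge of a child block of $Z$. That silently assumes all children of $Z$ lie outside $B$, i.e.\ that $Z$ is terminal in $B$. This need not hold.

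\emph{Counterexample.} Add a fifth taxon to Example~2. Take $X=\{1,\dots,5\}$, $P=\{1,2,3,5\}$, $Q=\{1,2,4,5\}$, $Z=\{1,2,5\}$, $S=\{1,3\}$, $T=\{2,4\}$, together with all singletons. Then:
\begin{itemize}
\item $Z$ is hybrid, with parents $P$ and $Q$.
\item $Z$ is a cut vertex, because of the bridge $Z\to\{5\}$.
\item Its owner block is $B=\{X,P,Q,Z,S,T,\{1\},\{2\}\}$, so $Z\to\{1\}$ is an edge of $B$.
\end{itemize}
For $x=1$ we have $\{x\}\in V(B)$, contradicting your claim, and Step~3 breaks. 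The set $E=\{1\}$ is forced into $K_B$ by Proposition~\ref{prop:admissible-bad-hybrid-splits}(b). It satisfies $x\in E\subseteq I=P\cap Q$, yet $Z\not\subseteq E$. The statement you single out as the main obstacle is also false in general: ``every downward path from $V(B)$ to a leaf of $Z$ passes through $Z$'' fails for the path $S\to\{1\}$. The proposition still holds in this example, but only the leaf $5$ witnesses it.

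\paragraph{Repair.} Choose $x$ specifically.
\begin{itemize}
\item Since $B-Z$ is connected and $Z$ is a cut vertex, $\mathcal H[\mathcal C]-Z$ has a component that misses $B-Z$.
\item A neighbour $a'$ of $Z$ in that component must be a child of $Z$, because all parents of $Z$ lie in $B$.
\item Take $x\in a'$. After $Z$, the path $Z\to a'\leadsto\{x\}$ stays in a component of $\mathcal H[\mathcal C]-Z$ different from the one containing $B-Z$.
\end{itemize}
So Proposition~\ref{prop:network-generators-localize}(a) now applies legitimately, and your Step~3 goes through verbatim.

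\paragraph{Comparison with the paper.} This choice is exactly the paper's: it picks a child $a'\in V(B_H^\downarrow)$ and a leaf $a$ below it. With this fix, your argument is a direct form of the paper's proof. The paper instead assumes $Z\notin K_{B_H^\uparrow}$, obtains a path $k\leadsto a$ avoiding $Z$, and builds an external ear contradicting Proposition~\ref{prop:hasse-block-realized-in-one-network-block}(a). You replace that ear construction with the block-cut separation to force $Z\subseteq E$ and then finish with the cover-edge argument. That substitution is fine.
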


\begin{proof}
By Lemmas~9--11 of \citet{HellmuthSchallerStadler2023}, a hybrid vertex is a
non-root vertex of exactly one non-trivial block, so $\operatorname{own}(Z)$ is
defined and is the unique parent block of the cut-vertex-node $Z$ in the rooted
block-cut tree.

Choose two distinct direct parents $p_1,p_2$ of $Z$ in
$\mathcal H[\mathcal C]$. Since $Z$ is a non-root vertex of
$B_H^\uparrow$, every direct parent of $Z$ lies in $B_H^\uparrow$. Thus
$p_1,p_2\in V(B_H^\uparrow)$. These parents are incomparable, because otherwise
one of the edges $p_1\to Z$ or $p_2\to Z$ would not be a cover edge. Hence they
are incompatible. Put $P:=p_1\cap p_2$.
Then $P\in\mathcal J(B_H^\uparrow)$ and $Z\subseteq P$.

Assume for contradiction that $Z\notin K_{B_H^\uparrow}$. Since $Z$ is a cut
vertex, there is another block $B_H^\downarrow$ containing $Z$. Every such
block is a child block of the cut-vertex-node $Z$, so it contains vertices
strictly below $Z$. Choose a child $a'$ of $Z$ with
$a'\in V(B_H^\downarrow)$, and then choose a leaf $a$ below $a'$. This gives a
directed path \(P_1:\ Z\to a'\leadsto a\).

Since $a\in Z\subseteq P$ and $P$ is generated by $K_{B_H^\uparrow}$, there is
some $k\in K_{B_H^\uparrow}$ such that $a\in k\subseteq P$.
As $Z\notin K_{B_H^\uparrow}$, we have $k\neq Z$.

We claim that $Z\subsetneq k$ is impossible. Indeed, if $Z\subsetneq k$, then
\[
Z\subsetneq k\subseteq p_1
\qquad\text{and}\qquad
Z\subsetneq k\subseteq p_2,
\]
contradicting that $p_1\to Z$ and $p_2\to Z$ are cover edges. Hence either
$k\subsetneq Z$ or $k$ is incompatible with $Z$.

In either case there is a directed path \(P_2:\ k\leadsto a\) that avoids
$Z$: if $k\subsetneq Z$, then $Z$ lies above $k$; if $k$ is incompatible with
$Z$ and $P_2$ passed through $Z$, then $Z$ would lie below $k$, implying
$Z\subseteq k$, contrary to the assumed incompatibility.

Next we show that $P_1$ does not re-enter $B_H^\uparrow$ after leaving $Z$.
Otherwise let $c\neq Z$ be the first later vertex of $P_1$ that lies in
$B_H^\uparrow$. Then the segment $P_1[Z,c]$ has both endpoints in
$B_H^\uparrow$, has all internal vertices outside $B_H^\uparrow$, and contains
the vertex $a'$. This is an external ear for $B_H^\uparrow$, contradicting
Proposition~\ref{prop:hasse-block-realized-in-one-network-block}(a).

Let $z$ be the last vertex of $P_2$ lying in $B_H^\uparrow$; this vertex exists
because $k\in B_H^\uparrow$. Let $y$ be the first common vertex of
$P_2[z,a]$ and $P_1$. Then \(P_1[Z,y]\cap P_2[z,y]=\{y\}\).
Indeed, $Z\notin V(P_2)$, the vertex $z$ is not on $P_1$ because $P_1$ does
not re-enter $B_H^\uparrow$, and any earlier common vertex would contradict the
choice of $y$.

Hence the union of $P_1[Z,y]$ and the reverse of $P_2[z,y]$ is a simple
undirected path $R$ from $Z$ to $z$. All internal vertices of $R$ lie outside
$B_H^\uparrow$: after leaving $Z$, the path $P_1$ stays outside
$B_H^\uparrow$, and after the last vertex $z$ in $B_H^\uparrow$, the path
$P_2$ also stays outside. Moreover, $R$ has an internal vertex outside
$B_H^\uparrow$, since $a'$ lies on $P_1[Z,y]$. Thus $R$ is an external ear for
$B_H^\uparrow$, again contradicting
Proposition~\ref{prop:hasse-block-realized-in-one-network-block}(a). Therefore
$Z\in K_{B_H^\uparrow}$.
\end{proof}

\begin{proposition}\label{prop:no-shared-bad-split}
Let $Z$ be a hybrid cut vertex of the original Hasse diagram $\mathcal H[\mathcal C]$.
Then $Z$ is never split during the construction.
\end{proposition}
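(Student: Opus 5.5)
The plan is to argue that no current copy of $Z$ is ever selected as a split source. The construction only splits vertices in $H_B^{\mathrm{bad}}(N)$ for some original non-trivial block $B$, so it suffices to show that a current vertex $v$ with $\lambda(v)=Z$ never lies in any such set. Membership in $H_B^{\mathrm{bad}}(N)$ requires $\lambda(v)\in \Cl(B)\setminus(K_B\cup\{\max B\})$. We therefore have to rule out every original non-trivial block $B$ with $Z\in V(B)$ and $Z\notin K_B\cup\{\max B\}$.

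First, by Lemmas~9--11 of \citet{HellmuthSchallerStadler2023}, as already used in the proof of Proposition~\ref{prop:shared-in-K}, the hybrid vertex $Z$ is a non-root vertex of exactly one non-trivial block, namely $B_H^\uparrow=\operatorname{own}(Z)$. Proposition~\ref{prop:admissible-bad-hybrid-splits}(a) gives the same uniqueness. Every other original block containing $Z$ is a child block of the cut-vertex-node $Z$ in the rooted block-cut tree. By the argument in the proof of Proposition~\ref{prop:admissible-bad-hybrid-splits}(a), in each such block every root-path enters through $Z$, so $Z=\max B$ there. Hence, for every block $B\neq B_H^\uparrow$ containing $Z$, we have $Z=\max B$, and $Z$ is excluded from $H_B^{\mathrm{bad}}$ by definition.

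For the remaining block $B=B_H^\uparrow$, Proposition~\ref{prop:shared-in-K} gives $Z\in K_{B_H^\uparrow}$. So $Z\notin\Cl(B)\setminus K_B$, and again no copy of $Z$ is bad relative to $B$. Combining the two cases, $\lambda(v)=Z$ excludes $v$ from $H_B^{\mathrm{bad}}(N)$ for every $B$, in every current network $N$. Since this holds at each step and the only splits performed are of bad vertices, an induction on the number of splits finishes the proof. Nothing about $N$ beyond the label $\lambda(v)$ is used, and labels are inherited under splits, so the induction is immediate.

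The main subtlety is the first step: we must make sure that a block in which $Z$ appears only as a non-root vertex is necessarily $B_H^\uparrow$. Equivalently, $Z$ cannot be a non-root vertex of a child block. This is exactly the root-entry argument of Proposition~\ref{prop:admissible-bad-hybrid-splits}(a), and I expect it to carry over verbatim. Note also that the statement should be read as the construction run with $K_B$ being generating sets. Proposition~\ref{prop:shared-in-K} holds for any generating set, so no additional hypothesis on the choice of $K_B$ is needed.
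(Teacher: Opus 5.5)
Your proof is correct and follows the paper's argument. In every original non-trivial block other than $\operatorname{own}(Z)$ the vertex $Z$ is the block root, and Proposition~\ref{prop:shared-in-K} places $Z$ in $K_{\operatorname{own}(Z)}$, so no current copy of $Z$ is ever bad and hence none is ever split. The only cosmetic difference is that you justify the root claim with the block-cut-tree entry argument from Proposition~\ref{prop:admissible-bad-hybrid-splits}(a), whereas the paper cites Lemma~9 of \citet{HellmuthSchallerStadler2023} directly.
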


\begin{proof}
Put $B_H^\uparrow:=\operatorname{own}(Z)$. Proposition~\ref{prop:shared-in-K} gives
$Z\in K_{B_H^\uparrow}$. In every other non-trivial block containing $Z$,
Lemma~9 of \citet{HellmuthSchallerStadler2023} forces $Z$ to be the block root;
otherwise $Z$ would be a non-root vertex of two non-trivial blocks. Thus for no
original non-trivial block $B$ do we have
$Z\in\Cl(B)\setminus(K_B\cup\{\max B\})$. A current copy of $Z$ can therefore
never be a bad hybrid vertex, and the construction never splits it.
\end{proof}

The next proposition is the main cut-vertex statement.  The proof contains the
two local claims that were previously isolated: a source-chain claim and an
exclusive element claim.

\begin{proposition}[Original cut vertices are never split]\label{prop:original-cut-never-split}
No original cut vertex of $\mathcal H[\mathcal C]$ is ever split during the construction.
\end{proposition}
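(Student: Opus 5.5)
The plan is to reduce to cut vertices that are \emph{tree} vertices of $\mathcal H[\mathcal C]$, and then show that such a vertex never acquires a second current parent. Hybrid cut vertices are already handled by Proposition~\ref{prop:no-shared-bad-split}. So let $Z$ be a cut vertex of $\mathcal H[\mathcal C]$ with $\indeg_{\mathcal H}(Z)\le 1$. If $Z$ is the root, it is never split, so assume $Z$ has a unique Hasse parent. A current copy of $Z$ can only be split if it is bad, hence hybrid. By Proposition~\ref{prop:splitting-invariants}(c), all its current parents are then copies of the single original parent of $Z$. So it suffices to show that every label on the ancestor chain of $Z$ that could feed multiple copies downward keeps a \emph{unique} current copy forever. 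I would argue by induction over the original blocks in a topological order of the rooted block-cut tree, together with induction on the number of splits.

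\emph{Source-chain claim.} Let $B_H^\uparrow=\operatorname{own}(Z)$, which exists because $Z$ is a non-root vertex of a non-trivial block, since it has a child block below it. Starting from $x_0:=Z$, follow unique Hasse parents $x_0\leftarrow x_1\leftarrow\cdots\leftarrow x_m$ for as long as $\indeg_{\mathcal H}(x_i)=1$. Stop at the first $x_m$ that is either hybrid in $\mathcal H$ or equal to $\max B_H^\uparrow$. All $x_i$ lie in $B_H^\uparrow$, because a tree vertex's parent edge stays in its own block by Lemma~\ref{lem:new-bad-locality}(c)-type reasoning. Every directed path in $\mathcal H$ from outside this chain down to $Z$ enters through $x_m$.

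If $x_m=\max B_H^\uparrow$, then $x_m$ is never split. Either it is the root, or it is a cut vertex that is a non-root vertex of a strictly earlier block, and the induction hypothesis applies. Its copies are never multiplied, and by induction down the chain each $x_i$ has exactly one current parent. So each $x_i$ is never hybrid, hence never bad, hence never split.

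\emph{Exclusive element claim.} This case is the main obstacle. Suppose $x_m$ is hybrid, with distinct incomparable Hasse parents $q_1,q_2\in V(B_H^\uparrow)$, and put $I:=q_1\cap q_2\in\mathcal J(B_H^\uparrow)$, so that $Z\subseteq x_m\subseteq I$. Choose a leaf $a$ below a child $a'$ of $Z$ lying in a child block $B_H^\downarrow$ of $Z$. The generating property gives $k\in K_{B_H^\uparrow}$ with $a\in k\subseteq I$.

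Next I would repeat the external-ear argument of Proposition~\ref{prop:shared-in-K} verbatim. If no directed path from $k$ to $a$ passed through $Z$, then a path $k\leadsto a$, glued with $Z\to a'\leadsto a$, would yield an external ear for $B_H^\uparrow$, contradicting Proposition~\ref{prop:hasse-block-realized-in-one-network-block}(a). Hence $Z\preceq k$. Since every path into $Z$ from off the chain passes through $x_m$, either $k=x_j$ for some $j$, or $x_m\subsetneq k$. The latter contradicts the cover edges $q_1\to x_m$ and $q_2\to x_m$, because $k\subseteq q_1\cap q_2$. Thus some chain label $x_j$ lies in $K_{B_H^\uparrow}$.

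To finish, I would use that a label in $K_B$ is never bad, so it is never split. Copies of a label arise only by splitting a vertex with that label, so $x_j$ keeps a unique current copy. Inductively down the chain, $x_{j-1},\dots,x_0=Z$ each have a unique current parent and hence a unique copy. In particular $Z$ never becomes hybrid and is never split.
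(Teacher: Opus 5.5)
Your argument is correct, but it is organized differently from the paper's. The paper argues by contradiction from the \emph{first} original cut vertex $Z$ that is split. It traces the split history: a copy of a non-hybrid label can only be split after a copy of its unique parent was split. This yields a chain $c_0\to\cdots\to c_r\to Z$ of labels that were actually split, so they lie outside $K_{B_H^\uparrow}\cup\{\max B_H^\uparrow\}$. Minimality of $Z$ forces this chain to end at a hybrid $c_0$ rather than at $\max B_H^\uparrow$. An exclusive element of $Z$, obtained from the components of $\mathcal H[\mathcal C]-Z$, then gives $Z\subsetneq k$ for a generator $k$ of the intersection of the parents of $c_0$. Since every path $k\leadsto Z$ must pass through $c_0$, this contradicts a cover edge.

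You instead take the static unique-parent chain in $\mathcal H[\mathcal C]$ and argue positively. When the chain ends at a hybrid, the same generator and cover-edge reasoning shows that some chain label lies in $K_{B_H^\uparrow}$; your exclusive element comes from reusing the external-ear argument of Proposition~\ref{prop:shared-in-K}. When the chain ends at $\max B_H^\uparrow$, you invoke an induction over blocks in topological order. Your version thus trades the split-history bookkeeping for a block induction, which the paper's minimality framing makes unnecessary. In return it proves slightly more: such a $Z$ keeps a unique current copy of indegree one throughout.

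Three justifications need tightening.
(i) A child block below $Z$ does not make $Z$ a non-root vertex of a non-trivial block, since its parent edge may be a bridge. In that case $Z$ is simply never bad, and the same remark covers $\max B_H^\uparrow$ in your first case when it has no owner block.
(ii) The chain stays in $B_H^\uparrow$ because every parent of a non-root vertex of a block lies in that block. Lemma~\ref{lem:new-bad-locality}(c) does not give this, since it concerns non-cut vertices and your $x_i$ may be cut vertices.
(iii) The fact that a label of $K_{B_H^\uparrow}$ is never split must come from uniqueness of the owner block. It cannot come from Proposition~\ref{prop:block-refinement}, whose proof relies on the present statement.
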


\begin{proof}
The proof first reduces a hypothetical first split cut vertex to a hybrid
source chain, and then uses one exclusive element of the cut vertex to obtain
a cover-edge contradiction.

Assume, for contradiction, that some original cut vertex is split. By
Proposition~\ref{prop:no-shared-bad-split}, hybrid cut vertices are never split, so
the first original cut vertex that is split is non-hybrid. Call it $Z$.

Let $B_H^\uparrow$ be the unique original non-trivial block with
$Z\in V(B_H^\uparrow)\setminus\{\max B_H^\uparrow\}$.
Since $Z$ is non-hybrid, it has a unique original parent $p$ in this block.

\emph{Claim 1.} There is an integer $r\ge0$ and a directed path
$c_0\to c_1\to\cdots\to c_r\to Z$ inside $B_H^\uparrow$ such that $c_0$ is hybrid,
$c_1,\dots,c_r,Z$ are non-hybrid, and every $c_j$ is split before $Z$ while
lying outside $K_{B_H^\uparrow}\cup\{\max B_H^\uparrow\}$.

To prove the claim, first note that unique parents persist for non-hybrid
labels: if $v$ is an original non-hybrid vertex with unique original parent
$q$, then every parent of every current copy of $v$ is a current copy of $q$.
This follows by induction on the number of splits, since a split can affect
parents of copies of $v$ only when the split label is $q$ or $v$. Consequently,
if a current copy of $v$ is split, then some current copy of $q$ was split
earlier: the split copy of $v$ has at least two parent copies of $q$, and
additional copies of $q$ can only be created by splitting a copy of $q$.

Since $Z$ is split, a current copy of its unique parent $p$ was split earlier.
Starting with $d_0:=p$, repeatedly replace a non-hybrid $d_j$ by its unique
original parent $d_{j+1}$; the preceding paragraph shows that a current copy
of $d_{j+1}$ was also split before $Z$. The process stops because each step
moves upward in the finite Hasse diagram. Each constructed $d_j$ lies in
$V(B_H^\uparrow)$, is split before $Z$, and does not lie in
$K_{B_H^\uparrow}\cup\{\max B_H^\uparrow\}$: for $d_0=p$ this follows because
$p$ is not a cut vertex before the first split cut vertex, and the induction
step uses Lemma~\ref{lem:new-bad-locality}(c) to keep the unique parent inside
$B_H^\uparrow$. The terminal vertex must be hybrid, otherwise its unique
parent would also have been split before $Z$. Reversing the chain proves the
claim.

\emph{Claim 2.} There is an element $x\in Z$ such that for every
$C\in V(B_H^\uparrow)$, \(x\in C\) implies \(Z\subseteq C\).

Since $B_H^\uparrow$ is non-trivial, $B_H^\uparrow-Z$ is connected. Hence all
vertices of $V(B_H^\uparrow)\setminus\{Z\}$ lie in a single component $U$ of
$\mathcal H[\mathcal C]-Z$. Because $Z$ is a cut vertex, another component
$D\neq U$ exists. The unique parent of $Z$ lies in $B_H^\uparrow$, hence in
$U$, so every neighbor of $Z$ outside $U$ is a child of $Z$. Choose such a
child $a'\in D$ and take $x\in a'$. If some $C\in V(B_H^\uparrow)$ contains
$x$ but not $Z$, then $C\in U$, and the downward paths from $a'$ and from $C$
to $\{x\}$ both avoid $Z$. This connects $a'$ to $C$ in
$\mathcal H[\mathcal C]-Z$, contradicting $a'\in D$ and $C\in U$. Thus the
claim holds.

By Claim 1, $c_0\notin K_{B_H^\uparrow}$. Choose $x\in Z$ as in Claim 2.
Let $A,A'$ be two distinct direct parents of the hybrid vertex $c_0$. The
vertex $c_0$ is not a cut vertex: it is split before $Z$, and $Z$ was chosen as
the first split original cut vertex. Hence all incident Hasse edges of $c_0$
belong to the same original block $B_H^\uparrow$, so
$A,A'\in V(B_H^\uparrow)$. The parents $A$ and $A'$ are incomparable, and
therefore incompatible. Put $I:=A\cap A'$.
Since $x\in Z\subseteq c_0\subseteq I$ and $K_{B_H^\uparrow}$ generates $I$,
there exists $k\in K_{B_H^\uparrow}$ with
$x\in k\subseteq I$.

We first show that $c_0\not\subseteq k$. If $c_0\subseteq k$, then
$c_0\neq k$ because $c_0\notin K_{B_H^\uparrow}$. Hence
\(c_0\subsetneq k\subseteq I\subsetneq A\),
where the last containment is proper because $A$ and $A'$ are incompatible.
This contradicts the fact that $A\to c_0$ is a cover edge of the Hasse diagram.

On the other hand, the choice of $x$ forces $Z\subseteq k$, because
$x\in k$ and $k\in V(B_H^\uparrow)$. Moreover $Z\notin K_{B_H^\uparrow}$: a
current copy of $Z$ can be bad only relative to its owner block
$B_H^\uparrow$, and the assumed split of $Z$ requires its label to be outside
$K_{B_H^\uparrow}\cup\{\max B_H^\uparrow\}$. Thus $Z\subsetneq k$.

Consider any directed path in the Hasse diagram from $k$ down to $Z$. Since
$Z,c_r,\dots,c_1$ are non-hybrid, they each have a unique direct parent. Thus
the final segment of this path must pass successively through
\(c_0,c_1,\dots,c_r,Z\).
Consequently $c_0\subseteq k$, contradicting the conclusion of the previous
paragraph. Therefore no original non-hybrid cut vertex is ever split. Together
with Proposition~\ref{prop:no-shared-bad-split}, this proves that no original cut
vertex of $\mathcal H[\mathcal C]$ is ever split.
\end{proof}

The last proposition packages the localization invariants used after cut
vertices have been fixed.  Its proof contains the generator-label and
cut-separation claims needed for the final block-control argument.

\begin{proposition}[Block refinement under splitting]\label{prop:block-refinement}
Throughout the construction, if $B$ is an original non-trivial block and
$x\in K_B$, then no current copy of $x$ is ever split; hence $x$ has exactly
one current copy at every stage.

Moreover, let $N'$ be obtained from a current network $N$ by splitting a bad
hybrid vertex $u\in H_B^{\mathrm{bad}}(N)$, where $B$ is an original
non-trivial block of $\mathcal H[\mathcal C]$, and assume that $\lambda(u)$ is
not a cut vertex of the original Hasse diagram.

Then:
\begin{enumerate}[label=(\arabic*)]
\item every old edge of $N$ whose incidence changes during the split lies in $N\langle B\rangle$;

\item every non-trivial block of $N'$ that contains either one of the new copies
$u_1,\dots,u_r$ or an edge whose incidence changed during the split is contained in the graph obtained
from $N\langle B\rangle$ by replacing $u$ with its copies $u_1,\dots,u_r$;

\item in particular, if every non-trivial block of $N$ is contained in $N\langle D\rangle$ for some
original non-trivial block $D$, then the same is true for every non-trivial block of $N'$.
\end{enumerate}
\end{proposition}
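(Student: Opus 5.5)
The plan has three parts: the generator-label claim, the edge-locality claims (1)--(2), and then (3) by combining them.

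\emph{Generator labels are never split.} Only bad hybrid vertices are ever split. Suppose a current copy $v$ of $x\in K_B$ were split. Then $v\in H_{D}^{\mathrm{bad}}(N)$ for some original block $D$, so $x\in V(D)\setminus(K_D\cup\{\max D\})$. By Proposition~\ref{prop:admissible-bad-hybrid-splits}(a), $D=\operatorname{own}(x)$, and $x$ is not a cut vertex, by Proposition~\ref{prop:original-cut-never-split}. Hence $x$ lies in a unique block, $D=B$, and $x\in K_B$ contradicts badness. If instead $x=\max B$ is a cut vertex, Proposition~\ref{prop:original-cut-never-split} again shows it is never split. Since the number of copies of a label grows only when a copy of that label is split, and $N_0=H$ has exactly one copy of each label, $x$ has exactly one current copy at every stage.

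\emph{Claims (1) and (2).} For (1), the edges whose incidence changes are exactly the edges $p_i\to u$ and $u\to w_j$. Since $\lambda(u)\in V(B)$ and $\lambda(u)$ is not an original cut vertex, Lemma~\ref{lem:new-bad-locality}(c) puts $\lambda(p_i)$ and $\lambda(w_j)$ in $V(B)$, so these edges lie in $N\langle B\rangle$. For (2), let $N'\langle B\rangle$ denote the induced subgraph on vertices with label in $V(B)$; by (1) this is $N\langle B\rangle$ with $u$ replaced by $u_1,\dots,u_r$. Let $D'$ be a non-trivial block of $N'$ containing some $u_i$ or a changed edge. I want to show $V(D')\subseteq V(N'\langle B\rangle)$. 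Suppose some vertex $y$ of $D'$ has $\lambda(y)\notin V(B)$. Because $D'$ is $2$-connected, there are two internally disjoint paths from $y$ to the relevant $u_i$ or edge endpoint. Project these paths to $H$ edge by edge, using Proposition~\ref{prop:splitting-invariants}(b). The projected walks leave $V(B)$ and return through distinct vertices, or through one vertex in a way that yields an external ear. I would show this forces either a cycle in $H$ through vertices inside and outside $B$, contradicting maximality of $B$, or forces both walks to pass through the same original cut vertex $c$ of $B$.

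This is the main obstacle. Distinct current vertices can carry the same label, so the two network paths can be disjoint while their projections collide at $c$; the projection argument alone does not rule this case out. The key fact I would use here is that original cut vertices are never split (Proposition~\ref{prop:original-cut-never-split}), so $c$ has exactly one current copy. Both paths would then pass through that single copy, contradicting internal disjointness unless the copy is the endpoint itself. In that remaining case, the block meets outside labels only through the single vertex $c$, and $D'$ cannot continue across a cut vertex. The careful part is setting up the ear and first-exit choices so that the projected walks can be assumed simple; I would reuse the block-cut tree path argument of Proposition~\ref{prop:network-generators-localize}(a).

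\emph{Claim (3).} Every non-trivial block of $N'$ that touches no changed edge and no new copy is a $2$-connected subgraph of $N$. It therefore lies inside some non-trivial block of $N$, and by hypothesis inside some $N\langle D\rangle$; its labels are unchanged, so it lies in $N'\langle D\rangle$. Every remaining non-trivial block of $N'$ lies in $N'\langle B\rangle$ by (2). Together these give (3).
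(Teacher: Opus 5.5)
Your proposal is correct and follows the paper's proof. The generator claim, (1) via Lemma~\ref{lem:new-bad-locality}(c), and (3) match the paper. In (2), projecting two internally disjoint paths through Proposition~\ref{prop:splitting-invariants}(b) and using the unique copy of the never-split cut vertex $c$ is the paper's cut-separation invariant stated in path form.

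The difficulties you anticipate in (2) do not arise. Each projected walk from $\lambda(u_i)$ to $\lambda(y)$ contains a path and so meets $c$ by Proposition~\ref{prop:network-generators-localize}(a), so you need no simplicity or ear/maximality argument. Taking the endpoint to be a copy $u_i$ (every changed edge is incident to one) removes the endpoint case, since neither $\lambda(u_i)$ nor $\lambda(y)$ equals $c$.
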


\begin{proof}
First, generator labels are fixed. If $x$ is an original cut vertex, then no
current copy of $x$ is ever split by Proposition~\ref{prop:original-cut-never-split}.
If $x$ is not a cut vertex, then $x$ lies in the unique original block $B$ and
in no other original non-trivial block. Since $x\in K_B$, a current copy of
$x$ cannot be bad relative to $B$ or to any other original block. Thus no copy
of $x$ is ever chosen for splitting, and because copies of a label are created
only by splitting that label, $x$ has exactly one current copy throughout.

We also record the cut-separation invariant. Let $Z$ be an original cut vertex
and let $\Gamma_1,\dots,\Gamma_m$ be the connected components of
$\mathcal H[\mathcal C]-Z$. Since $Z$ is never split by
Proposition~\ref{prop:original-cut-never-split}, every current network has a
unique current copy of $Z$. For each current network $M$, set
\[
W_j(M):=\{\,v\in V(M)\setminus\{Z\}\mid \lambda(v)\in V(\Gamma_j)\,\}.
\]
Induction on the number of splits shows that $M-Z$ contains no edge between
$W_a(M)$ and $W_b(M)$ for $a\neq b$: initially this is the component
decomposition of $\mathcal H[\mathcal C]-Z$, and a later split only replaces
edges incident with one vertex by edges through copies with the same label.
The induction hypothesis keeps the split vertex and all of its non-$Z$
neighbors in the same set $W_j(M)$.
Thus the separation induced by every original cut vertex persists in every
current network.

By Lemma~\ref{lem:new-bad-locality}(c), every parent and every child of $u$ in the current network has
original label in $V(B)$. Hence every old edge of $N$ whose incidence changes during the split of
$u$ lies in $N\langle B\rangle$. This proves (1).

Let $S$ be obtained from $N\langle B\rangle$ by replacing $u$ with its copies.  To prove (2), let
$D'$ be a non-trivial block of $N'$ containing a new copy, or equivalently an affected edge, and
fix such a copy $u_i\in V(D')$.  If $D'$ contained a vertex $y\notin S$, then
$\lambda(y)\notin V(B)$.  Let $Z$ be the first cut vertex after $B$ on the block-cut tree path from
$B$ to the node containing $\lambda(y)$.  By Proposition~\ref{prop:network-generators-localize}(a), every Hasse
path from $V(B)$ to $\lambda(y)$ passes through $Z$.  Since $\lambda(u)$ is not a cut vertex,
$\lambda(u_i)=\lambda(u)$ and $\lambda(y)$ lie in different components of
$\mathcal H[\mathcal C]-Z$.

The cut-separation invariant above preserves this separation in $N'$.  Thus $u_i$ and $y$ lie in different components of $N'-Z$.  But a non-trivial
block containing both vertices cannot be separated by deleting a single vertex.  This contradiction
shows $D'\subseteq S$, proving (2).

For (3), affected non-trivial blocks are handled by (2).  Any other non-trivial block of $N'$
uses only old vertices and old edges with unchanged incidence, hence is contained in a non-trivial
block of $N$ and therefore, by the induction hypothesis in (3), in some old support
$N\langle D\rangle\subseteq N'\langle D\rangle$.
\end{proof}

\section{Upper Bound III: Termination and Final Block Control}
\label{app:termination}

It remains to prove that the splitting construction stops and that the final
blocks have the label control needed for the upper bound.  This section isolates
the two ingredients used in the final counting argument: the root label of an
original block cannot become a non-root hybrid inside that block, and a
lexicographic counting vector strictly decreases whenever the prescribed split
rule is applied.

\subsection{Termination and final block control}

Let $B_1,\dots,B_s$ be the original non-trivial blocks of
$\mathcal H[\mathcal C]$, enumerated in a topological order of the original
block-cut tree, that is, whenever $B_j$ is a descendant of $B_i$, one has
$i<j$.  For every original vertex $x\in V(\mathcal H[\mathcal C])$, let
$\nu(x)$ denote the number of directed paths from the root $X$ of
$\mathcal H[\mathcal C]$ to $x$.

Since original cut vertices are never split by Proposition~\ref{prop:original-cut-never-split},
every bad hybrid vertex has an original label that is a non-root vertex of a
unique original non-trivial block.  For a bad hybrid vertex $v$, define its
\emph{key} by
\[ 
\kappa(v):=\bigl(i,d_{B_i}(\lambda(v))\bigr),
\qquad\text{where } \operatorname{own}(\lambda(v))=B_i.
\]
For each pair $(i,\ell)$ with $1\le i\le s$ and $0\le\ell\le h_i$, where
$h_i:=\max\{\,d_{B_i}(x):x\in V(B_i)\,\}$,
define
\[
a_{i,\ell}(N):=
\bigl|\{\,v\in H_{B_i}^{\mathrm{bad}}(N)\mid d_{B_i}(\lambda(v))=\ell\,\}\bigr|.
\]
Let $\Theta(N)$ be the vector obtained by listing all coordinates
$a_{i,\ell}(N)$ in lexicographic order of the keys $(i,\ell)$.

\begin{proposition}[Termination and final block control]\label{prop:termination-final-block-control}
If at each step one chooses a bad hybrid vertex of lexicographically minimal
key and splits it, then:
\begin{enumerate}[label=(\alph*)]
\item the splitting process terminates after finitely many steps;
\item in every current network obtained during the process, if $B$ is an original non-trivial block and $D$ is a current non-trivial block contained in $N\langle B\rangle$, then no current copy of $\max B$ is a non-root vertex of $D$.
\end{enumerate}
\end{proposition}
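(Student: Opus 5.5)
The plan is to prove (a) by showing that the vector $\Theta(N)$ decreases strictly in lexicographic order at every prescribed split. For (b) I would combine the fact that $\max B$ has a unique current copy with the edge-projection invariant of Proposition~\ref{prop:splitting-invariants}(b).

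\emph{Part (a).} Let $u$ be the bad vertex chosen, with key $\kappa(u)=(i,\ell)$, so $\operatorname{own}(\lambda(u))=B_i$, and let $N'$ be the result of splitting $u$. I will compare $\Theta(N')$ with $\Theta(N)$ coordinate by coordinate.

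\begin{enumerate}[label=(\roman*)]
\item The split changes indegrees only at $u$, at its new copies, and at the former children of $u$.
\item Each copy $u_j$ has indegree $1$, so none of the copies is bad.
\item The indegree of every former child weakly increases. Hence no vertex other than $u$ stops being bad.
\item By Lemma~\ref{lem:new-bad-locality}(a), every new bad vertex relative to $B_i$ has depth strictly larger than $\ell$.
\item By Lemma~\ref{lem:new-bad-locality}(b), every other new bad vertex has owner $B_j$ with $B_j$ a proper descendant of $B_i$, hence $j>i$ in the topological order.
\end{enumerate}

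It follows that all coordinates $a_{i',\ell'}$ with $(i',\ell')<(i,\ell)$ are unchanged. The coordinate $a_{i,\ell}$ drops by exactly one, since $u$ is removed and no new vertex of key $(i,\ell)$ appears. So $\Theta(N')<\Theta(N)$ lexicographically.

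The index set of keys $(i,\ell)$ is finite and fixed in advance, because it depends only on $H$. The lexicographic order on $\mathbb N^m$ for fixed $m$ is a well-order, so the process stops after finitely many steps. If a more explicit bound is wanted, one can also note that the number of current copies of any label $x$ never exceeds $\nu(x)$. The lexicographic argument already suffices.

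\emph{Part (b).} First I will show that $\max B$ has a unique current copy $r$. Indeed, $\max B$ is either the root $X$, which is never split because a bad vertex has indegree $\ge 2$, or it is an original cut vertex. In the latter case Proposition~\ref{prop:original-cut-never-split} shows it is never split.

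Now suppose $r\in V(D)$ with $r\neq\max D$, where $D\subseteq N\langle B\rangle$.
\begin{enumerate}[label=(\roman*)]
\item Take a directed path in $N$ from $\max D$ to $r$.
\item Any maximal segment of this path lying outside $D$ would be an external ear of $D$. By Proposition~\ref{prop:hasse-block-realized-in-one-network-block}(a), whose proof uses only biconnectivity and so applies to current networks, such a segment has no internal vertices.
\item An edge joining two vertices of a block lies in that block, by maximality. Hence the whole path lies in $D$.
\item In particular $r$ has a parent $p\in V(D)$, and $\lambda(p)\in V(B)$.
\item By Proposition~\ref{prop:splitting-invariants}(b), $\lambda(p)\to\max B$ is an original Hasse edge with both endpoints in $V(B)$, so $\lambda(p)\succ\max B$ inside $V(B)$.
\end{enumerate}
This contradicts the maximality of $\max B$ in $V(B)$.

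\emph{Main obstacle.} I expect the main difficulty to be in (a): checking that splitting $u$ creates no new bad vertex of key lexicographically at most $(i,\ell)$, and that no coordinate of smaller key can increase. This is exactly where Lemma~\ref{lem:new-bad-locality}(a),(b) and the topological ordering of the blocks are used. One must also confirm that Lemma~\ref{lem:new-bad-locality}(b) yields a \emph{proper} descendant block whenever the label of the new bad vertex leaves $V(B_i)$.
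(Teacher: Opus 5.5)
Your proof is correct. It differs from the paper's only in how finiteness is concluded in (a).

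\textbf{Part (a).} The paper proves the same lexicographic descent of $\Theta$, using Lemma~\ref{lem:new-bad-locality}(a),(b) and the topological order exactly as you do. It then spends the rest of the argument on uniform boundedness:
\begin{enumerate}[label=(\roman*)]
\item it shows that a current vertex has at most one child of each original label;
\item it injects current copies of $x$ into their parents to get $m_N(x)\le\nu(x)$;
\item it concludes that $\Theta$ ranges over a fixed finite set.
\end{enumerate}
You bypass all of this. The length $m$ of $\Theta$ is fixed by $H$, and $<_{\mathrm{lex}}$ on $\mathbb{N}^m$ is a well-order, so no infinite strictly decreasing sequence exists. This is shorter and fully sufficient for finiteness. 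The paper's route additionally gives an a priori bound on the number of copies of each label, and hence an explicit, if huge, bound on the number of steps; yours gives no such bound.

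The properness you flag as the main obstacle is immediate. A new bad vertex that is not bad relative to $B_i$ is bad relative to some $B'\neq B_i$. By Proposition~\ref{prop:admissible-bad-hybrid-splits}(a), its label then has owner $B'\neq B_i$. So the descendant block supplied by Lemma~\ref{lem:new-bad-locality}(b) is a proper descendant, and its index exceeds $i$.

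\textbf{Part (b).} Your argument is the paper's Claim~1. You also justify a step the paper asserts without comment: that the directed path from $\max D$ to the copy of $\max B$ stays inside $D$. This follows from the no-external-ear property of Proposition~\ref{prop:hasse-block-realized-in-one-network-block}(a), which applies because every current network has cluster system $\mathcal C$. Your preliminary observation that $\max B$ has a unique current copy is correct but not needed, since the contradiction works for any copy.
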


\begin{proof}
We first record two local claims needed for final block control, then prove
termination by a lexicographically decreasing bounded vector.

\emph{Claim 1.} No current copy of $\max B$ is a non-root vertex of a current
non-trivial block $D\subseteq N\langle B\rangle$.

Suppose otherwise. Then there is a directed path in $D$ from $\max D$ to a
current copy $x$ of $\max B$ of positive length. Let $y$ be the predecessor of
$x$ on such a path. Since $D\subseteq N\langle B\rangle$, we have
$\lambda(y)\in V(B)$. On the other hand,
Proposition~\ref{prop:splitting-invariants}(c) implies that
$\lambda(y)$ is an original parent of $\max B$. Since $\max B$ is the unique
maximal vertex of the original block $B$, no original parent of $\max B$ lies
in $V(B)$, a contradiction.

\emph{Claim 2.} A current vertex has at most one child of each original label.

Induct on the number of splits. Initially each original label occurs once.
When a vertex $u$ is split, an old parent of $u$ replaces its single edge to
$u$ by a single edge to the corresponding copy of $u$, and each new copy of
$u$ receives exactly the old child set of $u$. All other outgoing edge sets
are unchanged, so the induction hypothesis preserves the bound.

Let $N'$ be obtained from $N$ by one step of the procedure, so we split a bad hybrid vertex
$u\in H_{B_i}^{\mathrm{bad}}(N)$ of lexicographically minimal key
$\kappa(u)=(i,d)$.

\emph{Lexicographic descent.}
Only former children of $u$ can become new bad hybrid vertices. By
Lemma~\ref{lem:new-bad-locality}(b), such new bad vertices have owner block either
$B_i$ or a descendant of $B_i$ in the original block-cut tree. Since the blocks
$B_1,\dots,B_s$ are topologically ordered, descendant blocks have larger
indices. Within the same block $B_i$, Lemma~\ref{lem:new-bad-locality}(a) puts every
new bad hybrid vertex at depth strictly larger than $d$. Hence no coordinate of
$\Theta$ preceding $(i,d)$ changes, and no new bad hybrid vertex of key
$(i,d)$ is created. The split removes the chosen bad vertex $u$, so
\(a_{i,d}(N')=a_{i,d}(N)-1\), while only later coordinates may change.
Therefore \(\Theta(N')<_{\mathrm{lex}}\Theta(N)\).

\emph{Uniform boundedness.}
It remains to show that the coordinates of $\Theta(N)$ range over fixed finite
sets independent of the stage of the construction. For a current network $N$
and an original vertex $x\in V(\mathcal H[\mathcal C])$, write
$m_N(x):=\bigl|\{\,v\in V(N)\mid \lambda(v)=x\,\}\bigr|$.

We claim that \(m_N(x)\le\nu(x)\)
for every original vertex $x$, where $\nu(x)$ is the number of directed Hasse
paths from the root $X$ to $x$. We prove this by induction over the Hasse
order. The root $X$ has exactly one current copy, so the claim holds for
$X$.

Let $x\neq X$, and assume the claim has been proved for all original parents
of $x$. For each current copy $v$ of $x$, choose one parent $q(v)$ in the
current network. By Proposition~\ref{prop:splitting-invariants}(c), the original
label $\lambda(q(v))$ is an original parent of $x$. The map $v\mapsto q(v)$ is
injective: if two distinct current copies of $x$ had the same chosen parent,
that parent would have two children with original label $x$, contradicting Claim 2. Therefore
\[
m_N(x)\le
\sum_{p\in\operatorname{Par}_{\mathcal H}(x)}m_N(p).
\]
By the induction hypothesis, this is at most
\(\sum_{p\in\operatorname{Par}_{\mathcal H}(x)}\nu(p)\).
Every directed Hasse path from $X$ to $x$ ends with a unique edge $p\to x$,
where $p\in\operatorname{Par}_{\mathcal H}(x)$, so the last sum is exactly
$\nu(x)$. This proves the claim.

Consequently, for every $i$ and every $\ell$,
\[
0\le a_{i,\ell}(N)
\le \sum_{\substack{x\in V(B_i)\\ d_{B_i}(x)=\ell}} m_N(x)
\le \sum_{\substack{x\in V(B_i)\\ d_{B_i}(x)=\ell}} \nu(x).
\]
Thus each coordinate of $\Theta(N)$ ranges over a fixed finite set. Since every
split strictly decreases $\Theta$ lexicographically, and since there are only
finitely many possible vectors with these coordinate bounds, the splitting
process must terminate. This proves (a), while Claim 1 proves (b).
\end{proof}

\section*{Statements and Declarations}

\noindent\textbf{Competing interests.} The authors declare that they have
no competing interests.\par

\noindent\textbf{Data availability.} No empirical data are analyzed in
this article; all examples are illustrative cluster systems defined in the
text.\par

\noindent\textbf{Declaration of generative AI and AI-assisted technologies in
the writing process.} During the preparation of this manuscript, the authors
used ChatGPT to assist with language polishing, readability improvements, and
organization of expository text on the basis of an existing draft. The
mathematical content, including the ideas, definitions, results, proofs,
constructions, and references, was provided and checked by the authors. After
using this tool, the authors reviewed and edited the manuscript as needed and
take full responsibility for its content.\par

\end{document}